\setlist[description]{align = right,%
					  labelwidth = 2cm,%
					  font = \itshape\mdseries,%
					  topsep = .6em,%
					  parsep = 0pt,%
					  itemsep = .5em,%
					  listparindent = 1.2cm}
\setlist[enumerate]{(i),%
					topsep = .5em,%
					parsep = 0pt}				
\crefname{stat}{statement}{statements}
\crefname{lem}{lemma}{lemmas}
\newcommand*{\emDash}{\hspace*{.25pt}--\hspace*{.25pt}}
\newtheorem{thm}{Theorem}
\newtheorem{lem}[thm]{Lemma}
\newtheorem{obs}[thm]{Observation}
\DeclareMathOperator{\Or}{O}
\DeclareMathOperator{\Tr}{Tr}
\newcommand*{\nwspace}{\hspace*{.1em}}
\newcommand*{\fd}{\,{\to}\,}
\newcommand*{\Dyp}{\mathcal{D}}
\newcommand*{\Fyp}{\mathcal{F}}
\newcommand*{\Gyp}{\mathcal{G}}
\newcommand*{\Hyp}{\mathcal{H}}
\newcommand*{\var}[1]{\texttt{Var}_{#1}}
\newcommand*{\N}{\mathbb{N}}
\newcommand*{\rel}{\mathfrak{r}}
\newcommand*{\sel}{\mathfrak{s}}
\newcommand{\true}{\textsc{true}\xspace}  
\newcommand{\false}{\textsc{false}\xspace} 
\newcommand*{\FPT}{\textsf{FPT}\xspace}
\newcommand*{\co}{\textsf{co}}
\renewcommand*{\P}{\textsf{P}\xspace}
\newcommand*{\NP}{\textsf{NP}\xspace}
\newcommand*{\W}{\textsf{W}\xspace}
\newcommand*{\poly}{\textsf{poly}\xspace}
\newcommand{\FD}{\textsc{Functional Dependency}\xspace}
\newcommand{\EnumFD}{\textsc{Enumerate Minimal FDs}\xspace}
\newcommand{\FDfixed}{\textsc{Functional Dependency}\textsubscript{\textit{fixed RHS}}\xspace}
\newcommand{\EnumFDfixed}{\textsc{Enumerate Minimal FDs}\textsubscript{\textit{fixed RHS}}\xspace}
\newcommand{\HS}{\textsc{Hitting Set}\xspace}
\newcommand{\IND}{\textsc{Inclusion Dependency}\xspace}
\newcommand{\EnumIND}{\textsc{Enumerate Maximal INDs}\xspace}
\newcommand{\INDfixed}{\textsc{Inclusion Dependency}\textsubscript{\textit{Identity}}\xspace}
\newcommand{\EnumINDfixed}{\textsc{Enumerate Maximal INDs}\textsubscript{\textit{Identity}}\xspace}
\newcommand{\TransHyp}{\textsc{Transversal Hypergraph}\xspace}
\newcommand{\TransHypUnion}{\textsc{Transversal Hypergraph Union}\xspace}
\newcommand{\Unique}{\textsc{Unique Column Combination}\xspace}
\newcommand{\EnumUnique}{\textsc{Enumerate Minimal UCCs}\xspace}
\newcommand*{\WANS}{\textsc{Weighted Antimonotone $3$-normalized Satisfiability}\xspace}
\newcommand*{\EnumWANS}{\textsc{Enumerate Maximal Satisfying \wans Assignments}\xspace}
\newcommand*{\wans}{\textsc{WA$3$NS}\xspace}
\newcommand*{\WS}[1]{\textsc{Weighted $#1$-normalized Satisfiability}\xspace}
\journal{ArXiv.}
\begin{document}

\begin{frontmatter}

\title{The Complexity of Dependency Detection and Discovery in Relational Databases\tnoteref{t1}}

\tnotetext[t1]{An extended abstract of this work was presented at the
11th International Symposium on Parameterized and Exact Computation (IPEC 2016)~\cite{Blaesius16DependencyDetection}.}

\date{}

\author[1]{Thomas Bl{\"a}sius\fnref{fn1}}
\ead{thomas.blaesius@kit.edu}

\affiliation[1]{organization={Karlsruhe Institute of Technology},
            city={Karlsruhe},
            country={Germany}}

\fntext[fn1]{This work originated while all authors were affiliated 
with the Hasso Plattner Institute at the University of Potsdam.}

\author[2]{Tobias Friedrich\fnref{fn1}}
\ead{tobias.friedrich@hpi.de}

\author[2]{Martin Schirneck\corref{cor1}\fnref{fn1}}
\ead{martin.schirneck@hpi.de}

\cortext[cor1]{The third author is corresponding.}

\affiliation[2]{organization={Hasso Plattner Institute, University of Potsdam},
            city={Potsdam},
            country={Germany}}

\begin{abstract}
	Multi-column dependencies in relational databases come associated
	with two different computational tasks.
	The detection problem is to decide whether a dependency of a certain type and size
	holds in a given database,
	the discovery problem asks to enumerate all valid dependencies of that type.
	We settle the complexity of both of these problems for
	unique column combinations (UCCs), functional dependencies (FDs),
	and inclusion dependencies (INDs).
	
	We show that the detection of UCCs and FDs is W[2]-complete
	when parameterized by the solution size.
	The discovery of inclusion-wise minimal UCCs is proven 
	to be equivalent under parsimonious reductions
	to the transversal hypergraph problem
	of enumerating the minimal hitting sets of a hypergraph.
	The discovery of FDs is equivalent
	to the simultaneous enumeration of the hitting sets of multiple input hypergraphs.
	
	We further identify the detection of INDs as 
	one of the first natural \mbox{W[3]-complete} problems.
	The discovery of maximal INDs 
	is shown to be equivalent to enumerating the maximal satisfying assignments of
	antimonotone, 3-normalized Boolean formulas.
\end{abstract}

\begin{keyword}
	data profiling \sep enumeration complexity \sep functional dependency \sep
	inclusion dependency \sep parameterized complexity \sep parsimonious reduction \sep
	transversal hypergraph \sep unique column combination \sep W[3]-completeness
\end{keyword}

\end{frontmatter}

\setcounter{footnote}{0}

\section{Introduction}
\label{sec:intro}

\noindent
Data profiling is the extraction of metadata from databases.
An important class of metadata in the relational model are multi-column dependencies.
They describe interconnections between the values stored for different attributes
or even across multiple instances.
Arguably the most prominent type of such dependencies are the unique column combinations (UCCs),
also known as candidate keys.
These are collections of attributes such that the value combinations appearing
in those attributes identify all rows of the database.
In \Cref{fig:example_table}, the \texttt{Name} and \texttt{Area Code}
provide a unique fingerprint of the first database as, for example,
there is only one \texttt{Doe,$\,$John} living in \texttt{UK-W1K}.
Note that none of the two columns can be left out due to duplicate values.
Small UCCs are natural candidates for primary keys,
avoiding the need to introduce surrogate identification.
More importantly though, knowledge of the inclusion-wise minimal unique column combinations
enable various data cleaning tasks as well as query optimization.
For example, value combinations of UCCs are by definition distinct and form groups of size~$1$,
thus SQL queries working on UCCs can skip the grouping phase
and the \texttt{DISTINCT} operation, even if requested by the user~\cite{Chaudhuri94GroupByQueryOptimization,Paulley94ExploitingUniquenessinQueryOptimization}.
Also, the presence of UCCs allows for early returns of \texttt{SELECT}
and \texttt{ORDER BY} operations.

Unfortunately, a dataset only rarely comes annotated with its dependencies.
Much more often they need to be computed from raw data.
This leads to two different computational tasks.
The \emph{detection problem}, is to decide whether for a given database
whether it admits a UCC with only a few columns.
The \emph{discovery problem} instead asks for a complete list of all minimal UCCs,
regardless of their size and number.
An equivalent term for the latter, which is probably more common in the algorithms community,
is the \emph{enumeration} of minimal solutions.

\begin{figure}
\centering
	\includegraphics[scale=0.85]{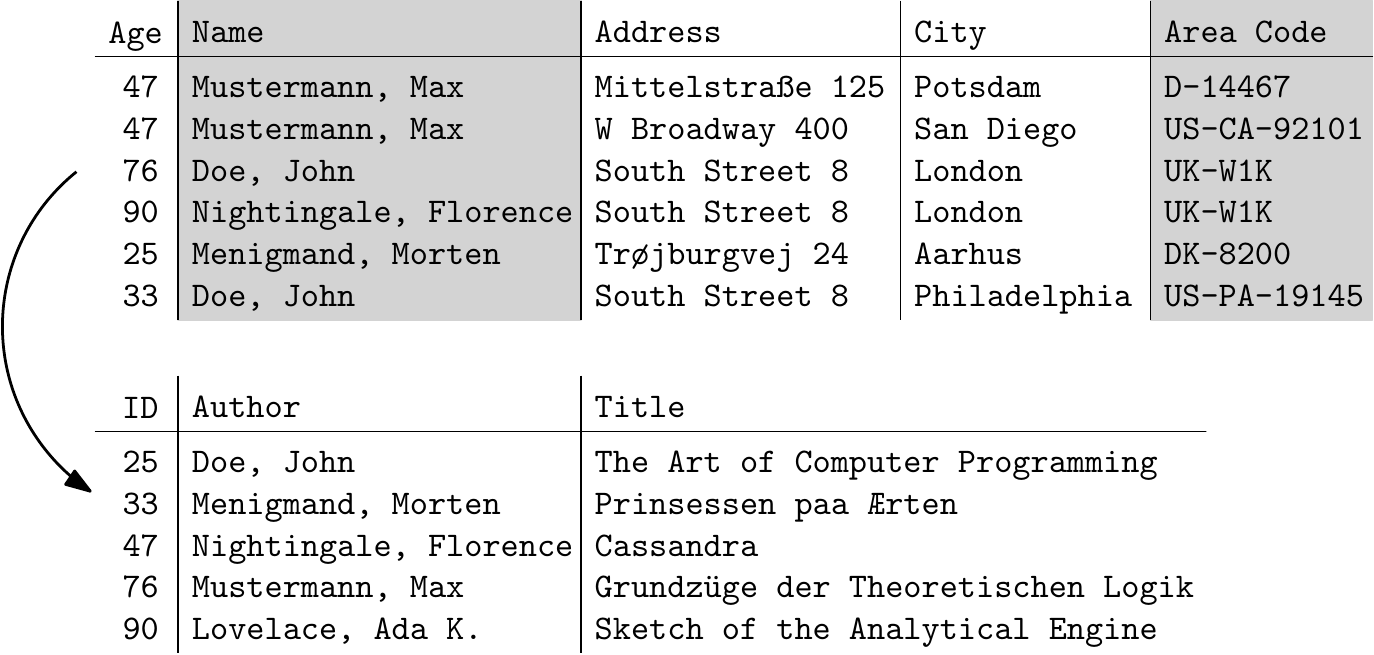}
	\caption{Illustration of multi-column dependencies.
		\texttt{Name} and \texttt{Area$\,$Code} together
		are a minimal unique column combination in the first database.
		The functional dependency $\texttt{Name}, \texttt{Area\,Code} \fd \texttt{City}$ holds,
		but its left-hand side is not minimal since $\texttt{Area\,Code} \fd \texttt{City}$
		is also valid.
		There are two maximal inclusion dependencies of size 1 
		between the first and the second database,
		\texttt{Age} is included in \texttt{ID} and \texttt{Name} in \texttt{Author}.
		They cannot be combined to an inclusion dependency of size 2.}
\label{fig:example_table}
\end{figure}

Functional dependencies (FDs), another dependency type, model the case in which one is only interested
in identifying the values of a specific column, instead of all columns.
In the example in \Cref{fig:example_table},
it is enough to know the \texttt{Area Code} to also infer the \texttt{City}
since the value in the former determines the latter.
Finally, inclusion dependencies (INDs) reveal connections between different databases
as they describe the fact that all value combinations in certain columns of one database 
also appear for the attributes of another.
In contrast to UCCs and FDs, where we want solutions to be small,
here we ought to find large or maximal inclusion dependencies.
Those are much more likely to be caused by the inherent structure of the data
than by mere coincidence.
Functional dependencies and inclusion dependencies
can be used for example for cardinality estimation in query plan optimizers,
query rewriting, and joins~\cite{Casanova82INDsandFDs,Giannella02QueryEvalApproximateFDs,Ilyas04CORDS_SIGMOD}.

The detection (decision) problems for all three types of dependencies are \NP-complete.\footnote{%
	See \Cref{subsec:UCC_problem_definitions,subsec:IND_problem_definitions}
	for specific statements and references.
}
Notwithstanding, detection algorithms often perform well on practical datasets~\citep{Abedjan18DataProfiling}.
One approach to bridge this apparent gap is to analyze whether properties
that are usually observed in realistic data benevolently influence the hardness of the problem. 
Exploiting those properties may even lead to algorithms
that guarantee a polynomial running time
in case these features are present in the problem instance.
This is formalized in the concept of parameterized algorithms~\cite{Cygan15ParamertizedAlgorithms,DowneyFellows13Parameterized,Niedermeier06Invitation}.
There, one tries to ascribe the exponential complexity
entirely to an parameter of the input, other than its mere size.
If the parameter is bounded for a given class of instances,
we obtain a polynomial running time whose order of growth is independent of the parameter.
This, of course, requires one to identify parameters 
that are both algorithmically exploitable and small in practice.

Consider, for instance, the histograms in \Cref{fig:histogram}, showing the
size distribution of minimal unique column combinations and
functional dependencies, as well as maximal inclusion dependencies in the
MusicBrainz
database~\cite{Swartz02MusicBrainz}.
The majority of functional dependencies are rather small,
same for unique column combinations and inclusion dependencies.
Beside surrogate keys, giving rise to
multiple functional dependencies of size~1, causalities in the data can also
lead to small FDs.  For example, the name of an
event together with the year in which it started determines the year in
which it ends, implying an FD of size~2.  Note
that the starting year alone is usually not enough to infer this information.
The name of the action, however, seems to indicate whether the event
ends in the same year or the next.
The size of the dependency is thus a natural candidate for an algorithmic parameter.
Notwithstanding, we show that it is unlikely to be the sole explanation for
the good practical performance.
We prove that the detection of unique column combinations
and functional dependencies is $\W[2]$-complete
with respect to the size of the sought solution, detecting
inclusion dependencies is even $\W[3]$-complete.
For all we know, this excludes any algorithm parameterized by the size.

\begin{figure}
  \centering
  \includegraphics{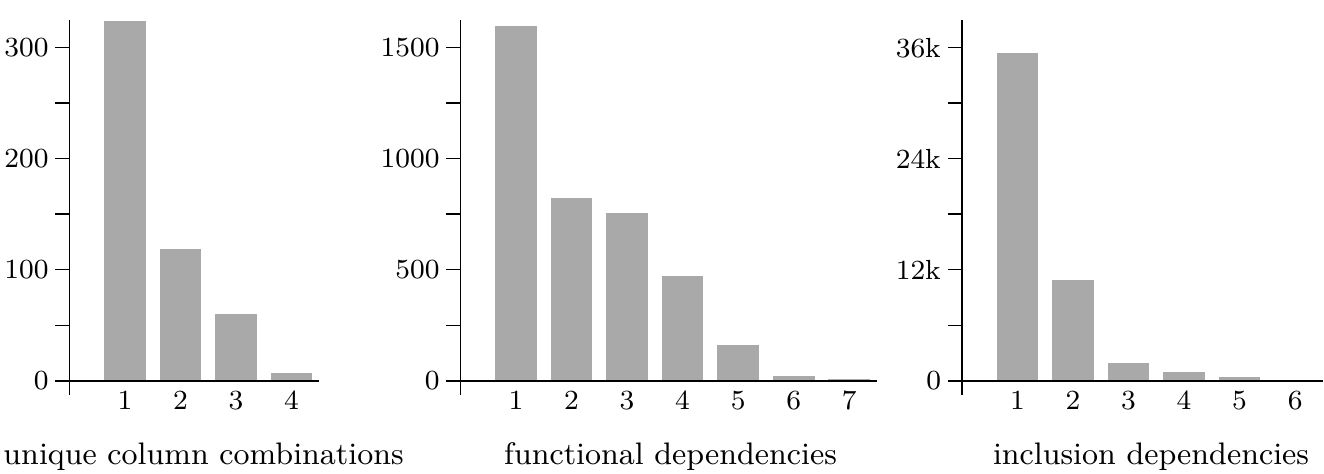}
  \caption{The number of minimal unique column combinations, minimal
    functional dependencies, and maximal inclusion dependencies
    for varying solution sizes in the MusicBrainz database.}
  \label{fig:histogram}
\end{figure}

The hardness of detecting INDs is surprising also from a complexity-theoretic standpoint.
Currently, there are only a handful of natural problems known to be complete for the class $\W[3]$.
The first one was given by Chen and Zhang~\cite{Chen06W3W4} in the context of supply chain management.
We show here that the detection of inclusion dependencies has this property,
making it the second such problem.
Since this result was first announced, Bläsius~et~al.~\cite{Blaesius19EfficientlyALENEX}
have also proven the extension problem for minimal hitting sets to be $\W[3]$-complete
using different techniques.
The latter has subsequently been improved by Casel et al.~\cite{Casel18ComplexitySolutionExtensionArxiv},
they have shown that already the special case of extension to minimal dominating sets
in bipartite graphs is hard for $\W[3]$. 
Finally, building on the work presented here,
Hannula, Song, and Link~\cite{Hannula21IndependenceFromDataArXiv}
have very recently identified independence detection in relational databases
as another representative of this class.

For enumeration problems,
the border of tractability does not run between polynomial and super-polynomial time,
at least not when measured in the input only.
The number of solutions one wishes to be computed may be exponential in the input size,
ruling out any polynomial algorithm.
Instead, one could ask for an algorithm 
that scales polynomially both in the input and the number of solutions.
The most important yardstick in enumeration complexity is the \emph{transversal hypergraph} problem,
where one is tasked to compute all minimal hitting sets of a given hypergraph.
Currently, the fastest known algorithm runs in time $N^{\Or(\sfrac{\log N}{\log\log N})}$,
where $N$ is the combined input and output size~\cite{FredmanKhachiyan96Dualization}.
It is the major open question in the field of enumeration algorithms
whether the transversal hypergraph problem can be solved in output-polynomial time.
Besides from data profiling, the problem emerges in many applications in
artificial intelligence, machine learning, distributed systems, monotone logic, and bioinformatics
see~\cite{EiterMakinoGottlob08Survey,Pusa20MOOMIN}.
There are many algorithms known that work well on practical instances~\cite{Gainer17AlgorithmsAndComputation}.

We use the insights gained on the detection of multi-column dependencies in databases
to also investigate their discovery (enumeration) problems.
It is known that minimal unique column combinations and functional dependencies
can be discovered in output-polynomial time if and only if
the transversal hypergraph problem has an output-poly\-no\-mial solution~\cite{EiterGottlob95RelatedProblems}.
However, this was proven via a Turing-style reduction that continuously calls
a decision subroutine to check 
whether the enumeration part has already found all solutions.
This construction inherently uses space
proportional to the output and is therefore hardly useful in practice.
We are able to radically simplify this equivalence
and connect unique column combinations, functional dependencies, and hitting sets
directly at the enumeration level using so-called \emph{parsimonious reductions},
running in polynomial time and space.
We give similar results also for the discovery of maximal inclusion dependencies.

Parsimonious reductions are the most restrictive form of reduction between enumeration problems,
but--therefore--also the most useful in practice.
The close connection to the transversal hypergraph problem that we prove in this work
explains in parts why dependency discovery works quite well on real-world databases.
Moreover, it allows us to transfer ideas from the design of hitting set enumeration 
algorithms to data profiling, thereby connecting the two research areas.
For example, there are very space-efficient algorithms known for the transversal hypergraph problem,
while memory consumption still seems to be a major obstacle in dependency discovery~\cite{Papenbrock16HyFD,Wei19DiscoveryEmbeddedUnique}.  
We hope that our results here inspire better data profiling algorithms in the future.\\

\noindent
\textbf{Our Contribution.} 
We settle the parameterized complexity of the car\-di\-na\-lity-constrained decision problems
for unique column combinations, functional dependencies, and inclusion dependencies
in relational databases, with the solution size as parameter.
We prove the following theorems.

\begin{thm}
\label{thm:unique_fd_W[2]}
	Detecting a unique column combination of size $k$ in a relational database
	is $\emph{\W}[2]$-complete when parameterized by $k$.
	The same is true for the detection of a valid, non-trivial functional dependency
	with a left-hand side of size at most $k$, 
	even if the desired right-hand side is given in the input.
\end{thm}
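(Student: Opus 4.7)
The plan is to establish both $\W[2]$-membership and $\W[2]$-hardness of the two problems by connecting them to \HS, the canonical $\W[2]$-complete problem. The whole argument rests on the following observation. For any two distinct rows $r, r'$ of the database, let $D(r, r') = \{c : r[c] \neq r'[c]\}$ denote their \emph{disagreement set}. Then a column set $X$ is a UCC if and only if it hits $D(r,r')$ for every pair of distinct rows, and, when the right-hand side $B$ is fixed, $X \fd B$ is a valid FD if and only if $X$ hits $D(r,r')$ for every pair with $r[B] \neq r'[B]$. Both characterizations reduce detection in polynomial time to \HS on a hypergraph with polynomially many edges while preserving the parameter $k$, which yields membership in $\W[2]$.

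For hardness I would reduce from \HS. Given a hypergraph $H = (V, E)$ with edges $E_1, \ldots, E_m$, I build a database on the columns $V$, augmented by a fresh column $B$ in the FD variant, and two rows $r_i^+, r_i^-$ for each edge $E_i$. The values are chosen so that inside a pair the disagreement set on $V$ is exactly $E_i$, and in the FD case additionally $r_i^+[B] \neq r_i^-[B]$. A convenient encoding sets $r_i^{\pm}[v] = (i, \pm)$ for $v \in E_i$ and $r_i^{\pm}[v] = 0$ otherwise, with $r_i^{\pm}[B] = \pm$. The parameter is preserved: a hitting set of size $k$ in $H$ should correspond to a UCC, respectively to a non-trivial FD left-hand side, of size $k$ and vice versa.

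The one technical point is to verify that cross-pairs $(r_i^a, r_j^b)$ with $i \neq j$ do not impose additional constraints beyond those already given by the individual edges. A short case distinction on the signs $a, b \in \{+, -\}$ shows that whenever such a pair is relevant---that is, disagrees at all in the UCC setting, or disagrees on $B$ in the FD setting---its disagreement set on $V$ equals $E_i \cup E_j$, which is hit as soon as $E_i$ is. Consequently, $X \subseteq V$ of size $k$ is a UCC (respectively, a valid non-trivial FD with right-hand side $B$) if and only if $X$ is a size-$k$ hitting set of $H$. Non-triviality of the FD is automatic because the fresh column $B$ lies outside $V$, so it cannot appear on the left-hand side.
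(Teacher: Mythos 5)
Your argument for \Unique and for \FDfixed is correct, and it is essentially a streamlined version of the paper's route: the membership direction is \Cref{obs:UCCs_and_transversals} (and its punctured-difference-set analogue), and your two-rows-per-edge gadget is a small variant of the paper's construction in \Cref{lem:hs_to_unique} (which uses a single all-zero base row against one row per edge); your cross-pair analysis showing that spurious difference sets are supersets $E_i \cup E_j$ matches the paper's, and your version even avoids the Sperner preprocessing. Both directions preserve the parameter, so for these two problems the proof is fine.

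The gap is that the theorem also asserts $\W[2]$-completeness of detecting an FD with an \emph{arbitrary} (unspecified) right-hand side \emDash{} that is the primary reading of ``detection of a valid, non-trivial functional dependency with a left-hand side of size at most $k$,'' with the fixed-RHS case added by the ``even if'' clause \emDash{} and your proof addresses neither direction for that variant. For membership, the hitting-set characterization no longer yields a many-one reduction: each candidate RHS $a$ gives its own hypergraph of punctured difference sets, and deciding whether \emph{some} one of these $|R|$ hypergraphs has a size-$k$ transversal is a disjunction of \HS instances, which is a Turing-style reduction rather than the FPT many-one reduction that membership in $\W[2]$ requires. The paper resolves this with a direct reduction to \WS{2} (\Cref{lem:fd_to_ws2}) that introduces selector variables $y_a$ for the right-hand side and shifts the target weight to $k+1$. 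For hardness, your construction does not rule out spurious small FDs $X \fd v$ with $v \in V$: a set $X$ could functionally determine some original column without being a hitting set, so the equivalence ``some non-trivial FD of size $k$ exists iff a size-$k$ hitting set exists'' is not established. The paper handles this with the masking gadget of \Cref{lem:unique_to_fd}\,(ii), adding for each unwanted RHS $b$ a row that differs from a fixed record only in column $b$, thereby invalidating every non-trivial FD into $b$. You would need to add both of these ingredients (or prove directly that your gadget admits no such spurious FDs) to cover the full statement.
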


\begin{thm}
\label{thm:ind_W[3]}
	Detecting an inclusion dependency of size $k$ in a pair of relational databases
	is $\emph{\W}[3]$-complete when parameterized by $k$.
	The result remains true even if both databases are over the same relational schema
	with the identity mapping between their columns.
\end{thm}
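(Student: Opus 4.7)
The plan is to establish \W[3]-completeness by connecting IND detection directly to the weighted satisfiability of antimonotone $3$-normalized Boolean formulas, i.e., \wans, the canonical \W[3]-complete problem from the Downey--Fellows normalization of the $W$-hierarchy. I would prove hardness already for the restricted identity version of the problem---whence hardness of the unrestricted version follows immediately---and argue membership directly for the unrestricted version.

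\textbf{Membership.} For the unrestricted case, introduce Boolean variables $y_{ij}$, where $y_{ij}=1$ means column $i$ of $R$ is paired with column $j$ of $S$ in the IND, and augment with pairwise clauses $\neg y_{ij}\lor\neg y_{i'j'}$ for every conflicting pair so that a weight-$k$ assignment is forced to form a partial matching of size $k$. The IND property itself then reads
\[
\bigwedge_{r\in R}\ \bigvee_{s\in S}\ \bigwedge_{(i,j)\,:\,r[i]\neq s[j]}\neg y_{ij},
\]
a conjunction of disjunctions of conjunctions of negative literals; the matching constraints---each a two-term disjunction of singleton conjunctions---fit into the outer conjunction without deepening the alternation. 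The whole formula is therefore antimonotone and $3$-normalized, so the problem lies in \W[3] by definition. The identity case is the specialization in which one uses a single set of variables $x_\ell$ for the chosen columns, yielding the same formula shape.

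\textbf{Hardness.} Reduce from \wans. Given $\Phi=\bigwedge_i\bigvee_j\bigwedge_{\ell\in L_{ij}}\neg x_\ell$ with parameter $k$, build two databases $R$ and $S$ sharing the schema $(A_1,\dots,A_n)$ indexed by the variables of $\Phi$. For each outer conjunct $D_i$ place a tuple $r_i$ in $R$, and for each inner conjunction $C_{ij}$ place a tuple $s_{ij}$ in $S$. Fill the cells so that $r_i[A_\ell]=s_{ij}[A_\ell]$ iff $\ell\notin L_{ij}$, using a single shared constant on agreement positions and pairwise-distinct fresh symbols wherever disagreement is required. For a column set $K$ of size $k$ with the identity mapping, the inclusion $\pi_K(R)\subseteq \pi_K(S)$ holds iff every $r_i$ has a matching $s_{ij}$ on $K$, iff for every $i$ there exists $j$ with $K\cap L_{ij}=\emptyset$, which is exactly the condition that the characteristic assignment of $K$ satisfies $\Phi$. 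Antimonotonicity of $\Phi$ together with the downward closure of INDs under column subsets guarantees that the ``weight exactly $k$'' and ``size exactly $k$'' conventions align, so the parameter transfers faithfully and the reduction is polynomial.

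\textbf{Main obstacle.} The most delicate part, I expect, is arguing that the three-level alternation is genuinely necessary and cannot be collapsed to a \W[2]-style $\bigwedge\bigvee$ of literals: the inner conjunction of negative literals depends on the choice of witness $s\in S$ for the current $r\in R$, so the middle and innermost levels are intrinsically coupled and cannot be merged into a single existential over pairs. Closely tied is the encoding of the mapping in the general case, where the partial-matching constraints must be expressed without positive literals (to preserve antimonotonicity) and without deepening the alternation beyond three. A minor but necessary technical check is that the fresh-symbol scheme produces mismatches only at intended positions, so that no spurious IND can arise from unintended column equalities; this is ensured by using globally unique identifiers for the disagreement cells.
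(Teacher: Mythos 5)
Your membership argument is essentially the paper's: pair-variables $y_{ij}$, antimonotone CNF matching constraints, and the $3$-normalized formula $\bigwedge_{r}\bigvee_{s}\bigwedge_{(i,j):\,r[i]\neq s[j]}\neg y_{ij}$; that part is fine. The hardness construction, however, has a genuine gap in the cell-filling scheme. By using \emph{one} shared constant $c$ on all agreement positions you make the rows $r_i$ of $R$ indistinguishable on any column set $K$ avoiding their few disagreement cells, so the witness for $r_i$ in $S$ may be a row $s_{i'j}$ belonging to a \emph{different} outer conjunct $D_{i'}$. Your equivalence then degrades from ``for every $i$ there is a $j$ with $K\cap L_{ij}=\emptyset$'' to ``there exist $i,j$ with $K\cap L_{ij}=\emptyset$''. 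Concretely, for $\Phi=\bigl((\neg x_1)\vee(\neg x_2)\bigr)\wedge(\neg x_3)$ and $k=2$, your scheme forces $r_1[A_1]=r_1[A_2]=r_2[A_1]=r_2[A_2]=c$ and yields $s_{11}=({\times}_1,c,c)$, $s_{12}=(c,{\times}_2,c)$, $s_{21}=(c,c,{\times}_3)$; the set $K=\{A_1,A_2\}$ is then an identity IND (both rows of $R$ project to $(c,c)$, which occurs in $s_{21}$), although $\Phi$ has no satisfying assignment of weight $2$. Note that your ``main obstacle'' paragraph worries about the wrong cells: fresh symbols in disagreement positions can only destroy matches, never create spurious ones; the danger lives entirely in the agreement positions. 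The repair is to make the agreement constant depend on $i$, so that $r_i$ can only be witnessed by some $s_{ij}$ with the same $i$. This is exactly what the paper enforces by building a separate database pair for each DNF subformula (with the clause index as the non-zero cell value) and combining the pairs by a disjoint union over pairwise disjoint value sets (\Cref{lem:ind-hardness_conjunction,lem:ind-hardness_DNF}).

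A second, smaller gap: hardness of the general problem does \emph{not} follow ``immediately'' from hardness of the identity version, because a pair admitting no identity IND of size $k$ may still admit one under a non-identity injection; you need the extra gadget of appending to both databases a single row of $|R|$ pairwise distinct fresh symbols, which rules out all non-identity mappings (\Cref{lem:ind-with-or-without-mapping}). Finally, nothing in the completeness proof requires you to argue that the three alternation levels ``cannot be collapsed''; that is the (conjectured) strictness of $\W[2]\subsetneq\W[3]$ and plays no role in either direction of the reduction.
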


We also characterize the complexity of enumerating all multi-column dependencies
of a certain type in a given database.
We do so by proving parsimonious equivalences with well-known enumeration problems
as well as generalizations thereof.

\begin{thm}
\label{thm:equiv_transHyp}
	The following enumeration problems are equivalent under parsimonious reductions:
	\begin{enumerate}
		\item listing all minimal unique column combinations of a relational database;
		\item listing all minimal, valid, and non-trivial functional dependencies
			with a fixed right-hand side;
		\item the transversal hypergraph problem.
	\end{enumerate}		
	The enumeration of functional dependencies with arbitrary right-hand sides
	is equivalent to listing the hitting sets of multiple input hypergraphs.
	The latter two problems are at least as hard as the transversal hypergraph problem.
\end{thm}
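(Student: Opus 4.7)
The plan rests on the classical ``difference-set'' view of these dependencies. For a database with rows $r_1, \dots, r_n$ over columns $V$, the \emph{difference set} of a pair $(r_i, r_j)$ is the set of columns on which they disagree. A column combination $X \subseteq V$ is a UCC iff $X$ hits every difference set, so the minimal UCCs are exactly the minimal hitting sets of the hypergraph whose edges are the difference sets. The map sending a database to its difference-set hypergraph is therefore a parsimonious reduction from $\EnumUnique$ to $\TransHyp$. The analogous statement for FDs with fixed right-hand side $A$ is that $X \fd A$ is valid iff $X$ hits the difference set (restricted to $V \setminus \{A\}$) of every pair of rows disagreeing on $A$; this yields the forward reduction from $\EnumFDfixed$ to $\TransHyp$.

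For the converse, given a hypergraph $H = (V,\mathcal{E})$ with edges $e_1, \dots, e_m$, I would construct a database over column set $V$ with $2m$ rows $r_1, s_1, \dots, r_m, s_m$. Rows $r_i$ and $s_i$ carry two distinct ``tag'' values on the columns of $e_i$ and the common value $0$ elsewhere, with the tags chosen so that no two different rows share a value on any column outside their own edge. A short case analysis shows that the difference set of $(r_i, s_i)$ is $e_i$, while any other pair produces a difference set of the form $e_i \cup e_j$, which is dominated by $e_i$. Hence the minimal difference sets coincide with the edges of $H$, and the minimal hitting sets of $H$ are in bijection with the minimal UCCs of the constructed database. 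For the fixed-RHS FD case I augment the same database by an extra column $A$ in which every $r_i$ carries value $0$ and every $s_i$ value $1$, so that only the cross pairs $(r_i, s_j)$ disagree on $A$; the same dominance argument then identifies the minimal FDs $X \fd A$ with the minimal hitting sets of $H$.

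For arbitrary right-hand sides, each column $A$ of a database spawns its own difference-set hypergraph, and listing all minimal FDs amounts to enumerating the minimal hitting sets of this collection. Conversely, given $m$ input hypergraphs, I would concatenate the construction above, one per hypergraph, with one dedicated RHS column per instance; a bit of bookkeeping (e.g.\ inserting ``probe'' row pairs that disagree on a single $V$-column) ensures that no non-trivial FD appears with an RHS from inside the vertex sets, so the output FDs correspond bijectively to the minimal hitting sets of the individual hypergraphs. Restricting to a single input shows the multi-hypergraph problem is at least as hard as $\TransHyp$.

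The main technical obstacle is controlling which minimal difference sets actually arise in the constructions: one has to argue, via the dominance $e_i \cup e_j \supseteq e_i$, that no cross pair of rows introduces a spurious minimal set, and in the arbitrary-RHS case one additionally has to arrange the database so that only the intended RHS columns give rise to non-trivial FDs. Once these conditions are verified, the solution bijections are polynomial-time computable by construction, which is precisely what parsimonious equivalence demands.
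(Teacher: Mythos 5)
Your proposal is correct and follows essentially the same route as the paper: difference-set (and punctured difference-set) hypergraphs for the forward reductions, an edge-encoding database with pairwise-distinct tag values and the dominance argument $e_i \cup e_j \supseteq e_i$ for the converse, and dedicated right-hand-side columns plus single-column ``probe'' row pairs for the multi-hypergraph case. The only differences are cosmetic (the paper uses one common all-zero row rather than $2m$ paired rows, and obtains the fixed-RHS case by appending an all-distinct key column to the UCC instance rather than a binary one).
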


\begin{thm}
\label{thm:enum_ind}
	The following enumeration problems are equivalent under parsimonious reductions:
	\begin{enumerate}
		\item listing all maximal inclusion dependencies of a pair of relational\\
				databases;
		\item listing all maximal satisfying assignments of an antimonotone,\\
				3-normalized Boolean formula.
	\end{enumerate}
	This remains true even if the two databases are over the same schema
	and only inclusions between the same columns are allowed.
	All those enumeration problems are at least as hard as the transversal hypergraph problem.
\end{thm}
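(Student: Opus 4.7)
The plan is to establish a parsimonious cycle \EnumIND $\to$ \EnumWANS $\to$ \EnumIND (identity version), which closes the equivalence for both the general and the identity formulations simultaneously, and to add a short parsimonious reduction \TransHyp $\to$ \EnumWANS for the hardness claim. Since the identity case is syntactically a subproblem of the general \EnumIND, the two reductions together cover all four directions.

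For \EnumIND $\to$ \EnumWANS, I introduce one Boolean variable $x_{A,B}$ for every pair consisting of a column $A$ of $D$ and a column $B$ of $D'$, intended to mean ``$A$ is mapped to $B$''. The formula conjoins (i)~injectivity clauses $\neg x_{A,B} \lor \neg x_{A,B'}$ for $B \neq B'$ (and symmetrically on the first index), and (ii)~for each tuple $t \in D$ a witness clause $\bigvee_{t' \in D'} \bigwedge_{(A,B)\,:\, t[A] \neq t'[B]} \neg x_{A,B}$, forcing some $t' \in D'$ to agree with $t$ on every currently mapped pair. Every clause fits into the outer AND--OR--AND pattern of negated literals, so the formula is antimonotone and $3$-normalized. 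A satisfying assignment is exactly the characteristic vector of a partial injective column map whose image forms a valid IND, and the two notions of maximality coincide.

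For \EnumWANS $\to$ \EnumIND in the identity case, write the input formula as $\phi = \bigwedge_i \bigvee_j \bigwedge_{v \in V_{i,j}} \neg x_v$ and set up a database pair over a shared schema $\{A_v\}$, one attribute per variable. For each outer clause $C_i$ add a tuple $t_i$ to $D$ with the private fingerprint $t_i[A_v] = (i,v)$, and for each inner term $V_{i,j}$ add a tuple $t_{i,j}$ to $D'$ that copies $t_i[A_v]$ whenever $v \notin V_{i,j}$ and carries a fresh pairwise-distinct value otherwise. Because fingerprints never collide across different outer indices, $t_i$ shares no column value with any $t_{i',\cdot}$ for $i' \neq i$, and within the same $i$ the agreement set of $t_i$ and $t_{i,j}$ is exactly $\{A_v : v \notin V_{i,j}\}$. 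Consequently a column set $S$ is a valid identity IND iff for every $i$ some $j$ satisfies $V_{i,j} \cap \{v : A_v \in S\} = \emptyset$, i.e., iff the characteristic vector of $S$ satisfies $\phi$. Maximality again transfers, so the reduction is parsimonious.

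For the hardness claim, given a hypergraph with edges $E_1, \dots, E_m$, the antimonotone formula $\bigwedge_i \bigvee_{v \in E_i} \neg y_v$, padded by single-literal inner conjunctions to reach depth $3$, has as its maximal satisfying assignments precisely the complements of the minimal hitting sets, yielding the parsimonious reduction from \TransHyp. The step I expect to be most delicate is verifying strict parsimoniousness of the second reduction: the fresh values chosen in $D'$ must be simultaneously pairwise distinct and disjoint from the entire fingerprint alphabet, so that no unintended tuple match creeps in and produces a valid IND outside the image of $\phi$'s satisfying assignments. All the other verifications are straightforward bookkeeping once this isolation property is nailed down.
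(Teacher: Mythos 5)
Three of your four arrows are sound and match the paper's own route: your \EnumIND $\to$ \EnumWANS reduction is exactly the lifted version of the paper's \Cref{lem:ind-to-wans-without-mapping} (the injectivity clauses are $\varphi_{map}$, the witness clauses are the $M_{\ell,m}$), your \TransHyp $\to$ \EnumWANS hardness argument is the paper's \Cref{lem:enum_ind_transversal_hard}, and your \EnumWANS $\to$ \EnumINDfixed construction is a slightly more compact, single-shot version of what the paper assembles from \Cref{lem:ind-hardness_conjunction,lem:ind-hardness_DNF} (one row of $\rel$ per outer conjunct rather than one per inner term, with per-conjunct fingerprints replacing the disjoint-alphabet union); the isolation property you flag is indeed the only point needing care there, and your fresh-value scheme handles it.

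The genuine gap is the sentence claiming that the identity case is ``syntactically a subproblem'' of the general \EnumIND, which you use to close the cycle \EnumINDfixed $\to$ \EnumIND for free. It is not a subproblem in the parsimonious sense: feeding the same pair $(\rel,\sel)$ to a general-IND enumerator produces maximal pairs $(X,\sigma)$ with arbitrary injections $\sigma$, and the solution sets are not in bijection in either direction. A general enumerator may output maximal INDs whose mapping is not the identity (spurious solutions with no preimage), and a maximal \emph{identity} IND $X$ need not be $\preccurlyeq$-maximal among general INDs, since $(X,\mathrm{id}_X)$ might extend to $(X\cup\{a\},\sigma')$ with $\sigma'(a)\neq a$, so it would be silently dropped from the output. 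The paper closes this direction with the construction of \Cref{lem:ind-with-or-without-mapping}: append to both databases a single row $t^{\times}=(\times_a)_{a\in R}$ of pairwise-distinct fresh symbols, which invalidates every non-identity column mapping and makes the two maximality notions coincide. With that one extra row your cycle closes and the rest of the argument goes through as written.
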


Finally, we also briefly discuss the consequences of our findings
to the approximability of minimum dependencies.\vspace*{.5em}

\noindent
\textbf{Outline.}
In the next section, we fix some notation and review basic concepts needed for the later proofs.
\Cref{sec:UCCs_and_FDs} treats unique column combinations and functional dependencies,
\Cref{sec:INDs} then considers inclusion dependencies.
Both of these sections start with a segment
that formally defines all decision and enumeration problems discussed in the respective section.
The paper is concluded in \Cref{sec:conclusion}.

\section{Preliminaries}
\label{sec:prelims}

\subsection{Hypergraphs and Hitting Sets}
\label{subsec:prelims_hypergraphs}

\noindent
A \emph{hypergraph} is a non-empty, finite \emph{vertex set} $V \neq \emptyset$ together
with a set of subsets $\Hyp \subseteq \mathcal{P}(V)$, the \emph{(hyper-)edges}.
We identify a hypergraph with its edge set $\Hyp$ if this does not create any ambiguities.
We do not exclude special cases of this definition like the empty graph
($\Hyp = \emptyset$), an empty edge ($\emptyset \in \Hyp$),
or isolated vertices ($V \supsetneq \bigcup_{E \in \Hyp} E$).
A hypergraph is \emph{Sperner} if none of its edges is contained in another.
The \emph{minimization} of $\Hyp$ is the subset of all
inclusion-wise minimal edges, 
$\min(\Hyp) = \{ E \in \Hyp \mid \forall E' \in \Hyp \colon E' \subseteq E \Rightarrow E' = E\}$.
This should not be confused with the notation for a minimum element of a set.
The minimization of a hypergraph is always Sperner.

A \textit{hitting set}, or \textit{transversal}, of a hypergraph $(V, \Hyp)$
is a set $T \subseteq V$ of vertices 
such that $T$ has a non-empty intersection with every edge $E \in \Hyp$.
A hitting set is \textit{(inclusion-wise) minimal}
if it does not properly contain another hitting set.
The minimal hitting sets of $\Hyp$
form a Sperner hypergraph on the same vertex set $V$,
the \textit{transversal hypergraph} $\Tr(\Hyp)$.
Regarding transversals, it does not make a difference whether
the full hypergraph $\Hyp$ is considered or its minimization,
it holds that $\Tr(\min(\Hyp)) = \Tr(\Hyp)$.

\subsection{Relational Databases and Dependencies}
\label{subsec:prelims_databases}

\noindent
A \emph{(relational) schema} $R$ is a non-empty, finite set of \emph{attributes} or \emph{columns}.
Each attribute comes implicitly associated with a set of admissible values.
A \emph{row}, or \emph{record}, over schema $R$ is a tuple $r$
whose entries are indexed by $R$ such that,
for any attribute $a \in R$, the value $r[a]$ is admissible for $a$.
A \emph{(relational) database} $\rel$ over $R$ is a finite set of records.
For any row $r$ and subset $X \subseteq R$ of columns,
$r[X]$ is the subtuple of $r$ projected onto $S$.
We let $\rel[X]$ denote the family of all such subtuples of rows in $\rel$.
Note that $\rel[X]$ is a multiset
as the same combination of values may appear in different rows.

We are interested in multi-column dependencies in relational databases,
namely, unique column combinations, functional dependencies, and inclusion dependencies.
For their definition, we need the notion of difference sets.
For any two distinct rows $r_1, r_2$, $r_1 \neq r_2$, over the same schema $R$,
their \emph{difference set} $\{a \in R \mid r_1[a] \neq r_2[a]\}$ 
is the set of attributes in which the rows disagree.
A difference set is \emph{(inclusion-wise) minimal} if it does not properly contain
a difference set for another pair of rows in the database.
We denote the hypergraph of minimal difference sets by $\Dyp$.
The vertex set is $R$.

A \emph{unique column combination} (UCC), or simply \textit{unique},
for some database $\rel$ over schema $R$,
is a subset $U \subseteq R$ of attributes 
such that for any two records $r, s \in \rel$, $r \neq s$,
there is an attribute $a \in U$ such that $r[a] \neq s[a]$.
Equivalently, $U$ is a UCC for $\rel$
iff the collection $\rel[U]$ of projections onto $U$ is mere set.
A UCC is \textit{(inclusion-wise) minimal} if it does not properly contain another UCC.
There is an intimate connection between UCCs and transversals in hypergraphs.
It is well-known in the literature,
see \cite{Abedjan18DataProfiling,Date03DatabaseSystems},
probably the first explicit mention was by Mannila and Räihä~\cite{Mannila87Dependency}.

\begin{obs}[Folklore]
\label{obs:UCCs_and_transversals}
	The unique column combinations are the hitting sets of difference sets.	
	In particular, let $\rel$ be a database and $\Dyp$
	the hypergraph of its minimal difference sets.
	Then, any minimal transversal in $\Tr(\Dyp)$ is a minimal unique of $\rel$
	and there are no other minimal uniques in $\rel$.
\end{obs}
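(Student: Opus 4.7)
The plan is to unwind the definitions in two stages: first identify UCCs as hitting sets of the hypergraph of \emph{all} difference sets, then reduce to just the minimal ones, and finally transfer the correspondence to the inclusion-wise minimal objects on either side.

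For the first stage, I would fix a database $\rel$ over schema $R$ and a subset $U \subseteq R$. Unfolding the UCC condition, $U$ fails to be a unique precisely when there exist distinct rows $r, s \in \rel$ with $r[a] = s[a]$ for every $a \in U$; that is, when $U$ is disjoint from the difference set $D_{r,s} = \{a \in R \mid r[a] \neq s[a]\}$. So $U$ is a UCC iff it meets every $D_{r,s}$ for $r \neq s$, i.e., iff $U$ hits the hypergraph of all difference sets.

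For the second stage, I would observe that the minimal difference sets $\Dyp$ are extracted from this larger hypergraph by the minimization operation. In any finite hypergraph, every edge contains at least one inclusion-wise minimal edge (descend to a smallest contained edge). Hence any set hitting $\Dyp$ also hits every larger difference set, while the reverse inclusion of hitting sets is immediate because $\Dyp$ is a sub-hypergraph. This gives equality of the UCCs with the hitting sets of $\Dyp$.

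The final step is to lift the correspondence to inclusion-wise minimal solutions. Since the map ``UCC $\leftrightarrow$ hitting set of $\Dyp$'' is literally the identity on subsets of $R$ by the previous step, it preserves the inclusion order. Therefore the minimal UCCs of $\rel$ are exactly the minimal hitting sets of $\Dyp$, which by definition are the edges of $\Tr(\Dyp)$. No step here is a real obstacle; the only minor subtlety to flag is the degenerate case where $\rel$ has at most one row, so that there are no difference sets and $\Dyp = \emptyset$; then every subset of $R$ is a UCC and the unique minimal UCC is $\emptyset$, matching the unique minimal transversal of the empty hypergraph.
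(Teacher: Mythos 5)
Your argument is correct, and it is exactly the standard definitional unwinding that the paper relies on: the paper states this observation as folklore with citations and gives no proof of its own, but it does record the key fact $\Tr(\min(\Hyp)) = \Tr(\Hyp)$ in \Cref{subsec:prelims_hypergraphs}, which is precisely your second stage. Your first stage (a set $U$ is a UCC iff it meets every difference set $D_{r,s}$), the order-preservation of the identity correspondence, and the degenerate case $\Dyp = \emptyset$ are all handled correctly, so there is nothing to add.
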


\emph{Functional dependencies} (FDs) over a schema $R$
are expressions of the form $X \fd a$ for some set $X \subseteq R$ of
columns and a single attribute $a \in R$. 
The set $X$ is the \emph{left-hand side} (LHS) of the dependency
and $a$ the \emph{right-hand side} (RHS).
We say that the FD has \emph{size} $|X|$.
A functional dependency $X \fd a$ is said to \emph{hold},
or be \emph{valid}, in an database $\rel$ (over $R$)
if any pair of records that agree on $X$ also agree on $a$,
that is, if $r[X] = s[X]$ implies $r[a] = s[a]$ for any $r,s \in \rel$.
Otherwise, $X \fd a$ is said to \emph{fail} in $\rel$, or be \emph{invalid}.
The FD $\emptyset \fd a$ holds iff all rows agree on $a$.
An FD $X \fd a$ is \emph{(inclusion-wise) minimal} if it holds in $\rel$ and
$X'' \fd a$ fails for any proper subset $X'' \subsetneq X$.
A functional dependency is \emph{non-trivial} if $a \notin X$.
Note that trivial functional dependencies hold in any database.

Finally, we turn to multi-column dependencies among multiple databases.
Let $R$ and $S$ be two relational schemas and $\rel$ and $\sel$
databases over $R$ and $S$, respectively.
For some $X \subseteq R$, let
$\sigma \colon X \to S$ be an injective map.  The pair
$(X, \sigma)$ is an \emph{inclusion dependency} (IND) if, for each
row $r \in \rel$, there exists some $s \in \sel$ such that
$r[a] = s[\sigma(a)]$ for every $a \in X$, that is, 
iff the inclusion $\rel[X] \subseteq \sel[\sigma(X)]$ holds. 
If the map $\sigma$ is given in the input,
we say that $X$ is the dependency.
Such an inclusion dependency then is \emph{maximal}
if the set $X$ is maximal among all INDs for $\rel$ and $\sel$.
For the general case, we define a partial order on the pairs $(X,\sigma)$.
We say that $(X,\sigma) \preccurlyeq (X',\sigma')$ holds 
if $X \subseteq X'$ and $\sigma$ is the restriction of $\sigma'$ to $X$.
An inclusion dependency then is \emph{maximal}
if it is an $\preccurlyeq$-maximal element among the inclusion dependencies
between $\rel$ and $\sel$.
Observe that the inclusion dependencies are indeed downward closed with respect to $\preccurlyeq$.
However, it may happen that $(X,\sigma)$ and $(X',\sigma')$ are both maximal
although $X' \subsetneq X$ is a strict subset.

\subsection{Parameterized Complexity}
\label{subsec:prelims_parameterized}

\noindent
The central idea of the \emph{parameterized complexity} of a decision problem
is to identify a quantity of the input, other than its mere size,
that captures the hardness of the problem. 
The decision problem associated with a language $\mathrm{\Pi} \subseteq \{0,1\}^*$
is to determine whether some instance $I \in \{0,1\}^*$ is in $\mathrm{\Pi}$.
A decision problem is \emph{parameterized} if any instance $I$ is
additionally augmented with a \emph{parameter} $k = k(I) \in \N^+$,
we thus have $\mathrm{\Pi} \subseteq \{0,1\}^* \,{\times}\, \N^+$.
A parameterized decision problem $\mathrm{\Pi}$ is \emph{fixed-parameter tractable} (FPT),
if there exists a computable function $f \colon \N^+ \to \N^+$ and an algorithm
that decides any instance $(I,k)$ in time $f(k) \,{\cdot}\, \poly(|I|)$.
The complexity class \FPT
is the collection of all fixed-parameter tractable problems.

Let $\mathrm{\Pi}$ and $\mathrm{\Pi}'$ be two parameterized problems.
A \emph{parameterized reduction}, or \emph{FPT-reduction},
from $\mathrm{\Pi}$ to $\mathrm{\Pi}'$ is an algorithm
running in time $f(k)\cdot \poly(|I|)$ on instances $(I, k)$,
which outputs some instance $(I', k')$ 
such that $k' \le g(k)$ for some computable function $g \colon \N^+ \to \N^+$,
and $(I,k) \in \mathrm{\Pi}$ holds if and only if $(I',k') \in \mathrm{\Pi}'$ does.
Due to the time bound, we have $|I'| \le f(k) \, \poly(|I|)$.
If there is also an FPT-reduction from $\mathrm{\Pi}'$ to $\mathrm{\Pi}$,
we say that the problems are \emph{FPT-equivalent}.
A notable special case of FPT-reductions are
polynomial many-one reductions that preserve the parameter,
meaning $k' = k$.

Parameterized reductions give rise to the so-called \W-\emph{hierarchy} of complexity classes.
We use one of several equivalent definitions involving (mainly) Boolean formulas.
However, first consider the \textsc{Independent Set} problem on graphs
parameterized by the size of the sought solution.
The class $\W[1]$ is the collection of all parameterized problems that admit
a parameterized reduction to \textsc{Independent Set}.
For some positive integer $t$, a Boolean formula is \emph{$t$-normalized}
if it is a conjunction of disjunctions of conjunctions of disjunctions (and so on) of literals
with $t\nwspace{-}\nwspace{1}$ alternations or, equivalently,
$t$ levels in total.
The \WS{t} problem is to decide, for a $t$-normalized formula $\varphi$ and a positive integer $k$,
whether $\varphi$ admits a satisfying assignment of Hamming weight $k$,
that is, with (exactly) $k$ variables set to \true.
Here, $k$ is the parameter.
For every $t \ge 2$, $\W[t]$ is the class of all problems reducible to \WS{t}.\footnote{%
	The definition via normalized formulas comes with an inconsistency at $t=1$.
	A \mbox{$1$-normalized} formula is a single conjunctive clause,
	the associated weighted satisfiability problem is trivially seen to be in $\P$
	and thus in $\FPT$.
}
The classes
$\FPT \subseteq \W[1] \subseteq \W[2] \subseteq \dots$
form an ascending hierarchy.
All inclusion are conjectured to be strict,
which is however still unproven.

\subsection{Enumeration Complexity}
\label{subsec:prelims_enumeration}

\noindent
\emph{Enumeration} is the task of compiling and outputting a list of 
all solutions to a computational problem without repetitions.
Note that this is different from a counting problem,
which asks for the mere number of solutions.
More formally, an \emph{enumeration problem} is a function
$\mathrm{\Pi} \colon \{0,1\}^* \to \mathcal{P}(\{0,1\}^*)$
such that, for all instances $I \in \{0,1\}^*$, 
the set of \emph{solutions} $\mathrm{\Pi}(I)$ is finite.
An algorithm solving this problem needs to output, on input $I$,
all elements of $\mathrm{\Pi}(I)$ exactly once.
We do not impose any order on the output.
We focus on the enumeration of minimal hitting sets, that is,
$\mathrm{\Pi} \colon (V,\Hyp) \mapsto \Tr(\Hyp)$.

An \emph{output-polynomial} algorithm runs in time polynomial
in the combined input and output size $N = |V| + |\Hyp| + |\Tr(\Hyp)|$.
A (seemingly) stronger requirement is an \emph{incremental polynomial}
  algorithm, generating the solutions in such a way that the $i$-th
\emph{delay}, the time between the $(i\nwspace{-}\nwspace{1})$-st and $i$-th output, is
bounded by $\poly(|V|,|\Hyp|,i)$.
This includes the \emph{preprocessing time} until the first solution arrives ($i = 1$)
as well as the \emph{postprocessing time} between the last solution and termination
($i = |\Tr(\Hyp)|\,{+}\,1$).
The strongest form of
output-efficiency is that of \emph{polynomial delay}, where the
delay is universally bounded by $\poly(|V|,|\Hyp|)$.
There is currently no output-polynomial algorithm known for the transversal hypergraph problem,
but its existence would immediately imply also an incremental polynomial algorithm~\cite{Bioch95Identification}.

Arguably the most restrictive way to relate enumeration problems are parsimonious reductions.
The concept is closely related but should not be confused with 
the homonymous class of reductions for counting problems~\cite{CapelliStrozecki19Incremental}.
A \emph{parsimonious reduction} from enumeration problem $\mathrm{\Pi}$ to $\mathrm{\Pi}'$
is a pair of polynomial time computable functions $f \colon \{0,1\}^* \to \{0,1\}^*$
and $g \colon (\{0,1\}^*)^2 \to \{0,1\}^*$ such that, for any instance $I \in \{0,1\}^*$,
$g(I, \cdot)$ is a bijection from $\mathrm{\Pi}'(f(I))$ to $\mathrm{\Pi}(I)$.
The behavior of $g(I, \cdot)$ on $\{0,1\}^*{\setminus}\mathrm{\Pi}'(f(I))$ is irrelevant.
Intuitively, any enumeration algorithm for $\mathrm{\Pi}'$ can then be turned
into one for $\mathrm{\Pi}$ by first mapping the input $I$ to $f(I)$
and translating the solutions back via $g$.
\Cref{obs:UCCs_and_transversals} establishes
a parsimonious reduction from the enumeration of minimal UCCs 
to the transversal hypergraph problem with $f \colon (R,\rel) \mapsto (R,\Dyp)$ and
$g((R,\rel), \cdot)$ being the identity over subsets of $R$.

\section{Unique Column Combinations and Functional Dependencies}
\label{sec:UCCs_and_FDs}

\noindent
Theoreticians as well as practitioners in data profiling and database design
are frequently confronted with the task of finding a small collection of items
that has a non-empty intersection with each member of a prescribed family of sets,
see~\cite{Abedjan18DataProfiling,Date03DatabaseSystems,Davies94SmallestFeatureSets,
Kantola92FDsAndINDs,Maier83TheoryRelationalDatabases}.
They thus aim to solve instances of the hitting set problem.
In this section, we show that this encounter is inevitable in the sense that 
detecting a small unique column combination
or functional dependency in a relational database is the same
as finding a hitting set in a hypergraph.
Even more so, this equivalence extends to enumeration.
We show that the associated discovery problems of finding all UCCs or FDs
is indeed the same as enumerating all hitting sets.

We first formally define the respective decision and enumeration problems.
The decision versions are always parameterized by the solution size.
We then order them in a (seemingly) ascending chain via parameterized reductions.
However, the start and end points of this chain will turn out to be FPT-equivalent,
settling the complexity of the problems involved as complete for the 
parameterized complexity class $\W[2]$.
We then also show the equivalence of the corresponding enumeration problems under parsimonious reductions.

\subsection{Problem Definitions}
\label{subsec:UCC_problem_definitions}

\noindent
Recall the definitions of hitting sets as well as
unique column combinations and functional dependencies from \Cref{subsec:prelims_hypergraphs,subsec:prelims_databases}.
We are interested in the parameterized complexity
of the associated cardinality-constrained decision problems.
The constraint always serves as the parameter.

\vspace*{.6em}
\noindent\HS
\begin{description}
	\item [Instance:] A hypergraph $(V,\Hyp)$ and a non-negative integer $k$.
	\item [Parameter:] The non-negative integer $k$.
	\item [Decision:] Is there a set $T \subseteq V$ of vertices with cardinality $|T| = k$
	
				such that $T$ is a hitting set for $\Hyp$?
\end{description}

\noindent
Note that if $k > |V|$, then the answer to the decision problem is trivially \false;
otherwise, there is no difference between deciding the existence of a transversal
with \emph{at most} or \emph{exactly} $k$ elements
since every superset of a hitting set is again a hitting set.
We ignore the special case of a too large $k$ since parameterized complexity is primarily concerned
with the situation that the parameter is much smaller than the input size.
The unparameterized \HS problem is one of Karp's initial 21 \mbox{\NP-complete} problems~\cite{Karp72Reducibility}.
In fact, its minimization variant--to compute the minimum size of any hitting set--is
even \NP-hard to approximate within a factor of $(1 - \varepsilon) \ln |V|$
for any \mbox{$\varepsilon > 0$}~\cite{Dinur14ParallelRepetition}.
The parameterized variant defined above is the prototypical \mbox{$\W[2]$-complete} problem~\cite{DowneyFellows13Parameterized}.

The corresponding enumeration problem broadens the notion of a ``small'' solution,
namely, the task is now to list all \emph{inclusion-wise minimal} hitting sets,
that is, the edges of the transversal hypergraph $\Tr(\Hyp)$.
All other hitting sets can be trivially obtained from the minimal ones
by arbitrarily adding more vertices.

\vspace*{.6em}
\noindent\TransHyp
\begin{description}[labelwidth = 2.45cm]
	\item [Instance:] A hypergraph $(V,\Hyp)$.
	\item [Enumeration:] List all edges of $\Tr(\Hyp)$.
\end{description}

\noindent
Let $N = |\Hyp| + |\Tr(\Hyp)| + |V|$ denote the combined input and output size,
Fredman and Kachiyan's algorithm solves \TransHyp in time 
$N^{\Or(\sfrac{\log N}{\log\log N})}$~\cite{FredmanKhachiyan96Dualization}.

We generalize the problem to the enumeration of minimal hitting sets
for multiple input hypergraphs simultaneously.
We do not prescribe any order in which the hypergraphs are processed.
However, we want to be able to quickly tell to which input a solution belongs.
For this, let $\Tr(\Hyp) \nwspace\dot\cup\nwspace \Tr(\Gyp)$
denote the disjoint union of the transversal hypergraphs of $(V, \Hyp)$ and $(W,\Gyp)$
with the additional requirement that the union is encoded in such a way that,
for any $T \in \Tr(\Hyp) \nwspace\dot\cup\nwspace \Tr(\Gyp)$,
the containment $T \in \Tr(\Hyp)$
is decidable in time $\poly(|V|,|W|,|\Hyp|, |\Gyp|)$,
independently of the sizes $|\Tr(\Hyp)|$ and $|\Tr(\Gyp)|$.

\vspace*{.6em}
\noindent\TransHypUnion
\begin{description}[labelwidth = 2.45cm]
	\item [Instance:] A $d$-tuple of hypergraphs $(\Hyp_1, \Hyp_2, \dots, \Hyp_d)$.
	\item [Enumeration:] List all edges of 
		$\Tr(\Hyp_1) \,\dot\cup\, \Tr(\Hyp_2) \,\dot\cup\, \dots \,\dot\cup\, \Tr(\Hyp_d)$.
\end{description}

We now define the detection and discovery problems of multi-column dependencies 
in relational databases, again starting with the cardinality-constraint decision problems.

\vspace*{.6em}
\noindent\Unique
\begin{description}
	\item [Instance:] A relational database $\rel$ over schema $R$ 
	
		and a non-negative integer $k$.
	\item [Parameter:] The non-negative integer $k$.
	\item [Decision:] Is there a set $U \subseteq R$ of attributes with cardinality $|U| = k$
	
		such that $U$ is a unique column combination in $\rel$?
\end{description}

\noindent
The minimization variant of \Unique is \NP-hard~\cite{Skowron92DiscernibilityMatrices}.

\vspace*{.6em}
\noindent\EnumUnique
\begin{description}[labelwidth = 2.45cm]
	\item [Instance:] A relational database $\rel$.
	\item [Enumeration:] List all minimal unique column combinations of $\rel$.
\end{description}

For functional dependencies,
we define two variants of the decision problem that slightly differ in the given input.
The first one fixes the right-hand side of the desired dependency,
while the second one asks for an FD with arbitrary RHS holding in the database.
While their parameterized complexity will turn out to be the same,
there are some differences in their discovery.

\pagebreak

\noindent\FDfixed
\begin{description}
	\item [Instance:] A relational database $\rel$ over schema $R$, an attribute $a \in R$,
		
			and a non-negative integer $k$.
	\item [Parameter:] The non-negative integer $k$.
	\item [Decision:] Is there a set $X \subseteq R{\setminus}\{a\}$ with $|X| = k$ such that
	
			the functional dependency $X \fd a$ holds in $\rel$?
\end{description}

\vspace*{.6em}
\noindent\FD
\begin{description}
	\item [Instance:] A relational database $\rel$ over schema $R$
	
		and a non-negative integer $k$.
	\item [Parameter:] The non-negative integer $k$.
	\item [Decision:] Is there a set $X \subseteq R$ with $|X| = k$ 
		and an attribute $a \in R{\setminus}X$
	
		such that the functional dependency $X \fd a$ holds in $\rel$?
\end{description}

\noindent
The unparameterized variant of \FDfixed is \NP-complete
even if the number of admissible values for each attribute is at most~$2$~\cite{Davies94SmallestFeatureSets}.
It is the same to ask for a functional dependency whose left-hand side
is of size at most $k$;
unless of course $k \ge |R|$ 
since then no non-trivial FD adheres to the (exact) size constraint.
Again, we ignore this issue.

Recall that we say that a functional dependency $X \fd a$ is minimal
if its LHS $X$ is inclusion-wise minimal among all $X' \subseteq R$ such that $X' \fd a$ is valid.

\vspace*{.6em}
\noindent\EnumFDfixed
\begin{description}[labelwidth = 2.45cm]
	\item [Instance:] A relational database $\rel$ over schema $R$
		and an attribute $a \in R$.
	\item [Enumeration:] List all minimal, valid, non-trivial functional dependencies of~$\rel$
		
		\hspace*{0.3cm} with right-hand side $a$.
\end{description}

\vspace*{.6em}
\noindent\EnumFD
\begin{description}[labelwidth = 2.45cm]
	\item [Instance:] A relational database $\rel$.
	\item [Enumeration:] List all minimal, valid, non-trivial functional dependencies~of~$\rel$.
\end{description}

\noindent
\EnumUnique and \EnumFD can be solved in output-polynomial time
(in incremental polynomial time or with polynomial delay, respectively)
if and only if this is possible for the \TransHyp problem~\cite{EiterGottlob95RelatedProblems}.

\subsection{Detection}
\label{subsec:UCC_detection}

\noindent
Next, we prove the parameterized complexity
of the detection problems for unique column combinations
and functional dependencies.
We start by showing the $\W[2]$-hardness of \Unique.
The following lemma was obtained independently by Froese et al.~\cite{Froese16DistinctVectors}.

\begin{figure}
    \centering
    \begin{subfigure}{0.5\textwidth}
        \centering
        \includegraphics[page=1,scale=1.1]{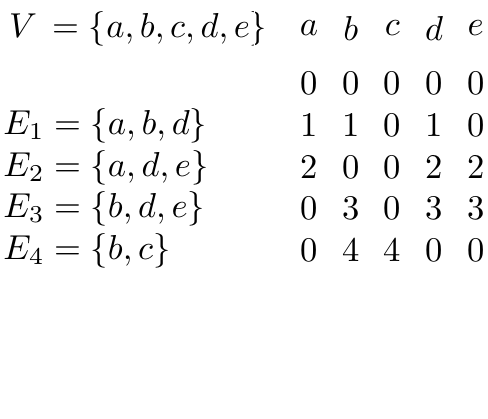}
        \captionsetup{justification=centering}
		\caption{}													
	\label{subfig:unique-reduction}
    \end{subfigure}%
    \begin{subfigure}{0.5\textwidth}
        \centering    
	    \includegraphics[page=2,scale=1.1]{unique-fd-reductions}
	    \captionsetup{justification=centering}
		\caption{}													
	\label{subfig:fd-reduction}
    \end{subfigure}
    \caption{Illustration of the reductions in \Cref{lem:hs_to_unique,lem:unique_to_fd}.
		(a)~An instance of \HS and the equivalent instance of \Unique.
		(b)~An instance $\rel$ of \FDfixed with fixed right-hand side $a$
		and the equivalent instance $\rel'$ of \FD.
		The functional dependencies $ab \fd d$ holds in $\rel$, but not in $\rel'$.}
  \label{fig:unique-fd-reductions}
\label{fig:oracle_run_times}
\end{figure}

\begin{lem}
\label[lem]{lem:hs_to_unique}
	There is a parameterized reduction from \emph{\HS}	to the \emph{\Unique} problem.
\end{lem}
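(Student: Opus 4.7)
The plan is to invoke Observation 2.1, which identifies UCCs with hitting sets of difference sets. Given an instance $(V,\Hyp,k)$ of \HS, we construct in polynomial time a database $\rel$ on schema $R = V$ whose minimal difference sets are precisely $\min(\Hyp)$, and then output $(\rel, k)$. Since the hitting sets of $\Hyp$ coincide with those of $\min(\Hyp)$, and every difference set of $\rel$ will contain a minimal one, setting $k' = k$ already yields correctness.

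For the construction, fix an enumeration $\Hyp = \{E_1, \dots, E_m\}$ and use values in $\{0, 1, \dots, m\}$. The database $\rel$ has $m+1$ rows: a base row $r_0$ assigning $0$ to every attribute, and for each $i \in \{1,\dots,m\}$ a row $r_i$ assigning value $i$ to each attribute in $E_i$ and $0$ elsewhere. This runs in time polynomial in $|V| + |\Hyp|$, and because $k' = k$ the map is a parameter-preserving polynomial reduction, hence in particular an FPT-reduction.

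For the verification, the difference set of $r_0$ and $r_i$ is exactly $E_i$, while for $1 \le i < j \le m$ the difference set of $r_i$ and $r_j$ equals $E_i \cup E_j$: on $E_i \cap E_j$ the two rows carry the distinct nonzero values $i$ and $j$; on the symmetric difference, one value is nonzero while the other is $0$; outside $E_i \cup E_j$ both rows carry $0$. Since $E_i \cup E_j \supseteq E_i$, such a cross-difference is never inclusion-wise minimal among difference sets. Consequently the minimization of the family of difference sets equals $\min(\Hyp)$, and Observation 2.1 then provides a bijection between minimal UCCs of $\rel$ and minimal hitting sets of $\Hyp$; restricting to cardinality $k$ yields the required equivalence of instances.

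The only subtle point is this last verification. The crucial design choice is to assign pairwise distinct nonzero labels $1,\dots,m$ to the distinct edges. A naive $\{0,1\}$-valued matrix would make cross-differences between non-base rows come out as symmetric differences $E_i \triangle E_j$ rather than unions, and such sets can be strictly smaller than every $E_\ell$, introducing spurious minimal difference sets that break the correspondence with $\Hyp$. Using fresh labels eliminates this issue because it forces every pair of non-base rows to disagree on the entire union of their edges.
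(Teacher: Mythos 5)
Your construction is exactly the paper's: the all-zero base row plus one row per edge with a fresh nonzero label, so that difference sets against $r_0$ recover the edges and cross-differences are unions, hence never newly minimal; the only cosmetic difference is that the paper Sperner-izes $\Hyp$ up front while you absorb that step into the final minimization argument, which is equivalent. The proof is correct and complete.
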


\begin{proof}
	Let $(V,\Hyp)$ be the hypergraph given in the input to the \HS problem.
	Without loss of generality, we can assume it to be Sperner;
	otherwise, we replace it by its minimization $\min(\Hyp)$ (in quadratic time).
	Observe that $\min(\Hyp)$ has a hitting set of size at most $k$ iff $\Hyp$ has one.
	We construct from $\Hyp$ in polynomial time a database $\rel$ over schema $V$
	such that the minimal difference sets of $\rel$ are the edges of $\Hyp$.
	The lemma then follows immediately from \Cref{obs:UCCs_and_transversals}.
	In particular, since \Cref{obs:UCCs_and_transversals}
	transfers solutions, the parameter $k$ is preserved by the reduction.
	
	Let $E_1, E_2, \dots, E_m$ be the edges of $\Hyp$.
	We take the integers $0,1,\dots,m$ as the admissible values for all attributes in $V$.
	As rows, we first add the all-zero tuple $r_0 = (0)_{a \in V}$ to $\rel$.
 	For each $i \in [m]$, we also add the record $r_i$ defined as
	\begin{equation*}
		r_i[a] =
			\begin{cases}
				i, & \text{if } a \in E_i;\\
				0, & \text{otherwise.}
			\end{cases}
	\end{equation*}
	
	\noindent
	See \Cref{subfig:unique-reduction} for an illustration.
	Clearly, $\rel$ can be computed in linear time.
	
	Any edge $E_i$ is a difference set in $\rel$, namely,
	that of the pair $(r_0, r_i)$.
	Any other difference set must come from a pair $(r_i, r_j)$
	with $1 \le i < j \le m$.
	It is easy to see that those rows disagree in $E_i \cup E_j$,
	which is not minimal.
	Since $\Hyp$ is Sperner, it contains exactly the minimal difference sets of $\rel$.
\end{proof}

The next two reductions are straightforward due
to the similar structures of uniques and functional dependencies.
While a UCC separates any pair of rows,
an FD $X \fd a$ needs to distinguish only those with $r[a] \neq s[a]$.

\begin{lem}
\label[lem]{lem:unique_to_fd}
	There are parameterized reductions
	\begin{enumerate}
		\item\label[stat]{case:unique_to_fdfixed} from \emph{\Unique} 
			
			to \emph{\FDfixed};
		\item\label[stat]{case:fdfixed_to_fd} from \emph{\FDfixed}
		
			to \emph{\FD}.
	\end{enumerate}
\end{lem}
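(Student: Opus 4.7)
For Case (i), my plan is to extend the schema with a fresh ``identifier'' attribute and exploit the resulting bijection between rows and $a$-values. Concretely, from the \Unique instance $(R,\rel,k)$ I would build an \FDfixed instance with schema $R' = R \cup \{a\}$, a database $\rel'$ obtained by extending each row of $\rel$ with its own pairwise distinct value on $a$, the designated right-hand side $a$, and the same parameter $k$. Because the $a$-values are pairwise distinct, two rows of $\rel'$ coincide on $a$ only if they are the same row. Hence, for any $X \subseteq R$, the dependency $X \fd a$ holds in $\rel'$ iff no two distinct rows of $\rel'$ agree on $X$, which is precisely the UCC condition for $\rel$. The reduction is clearly parameter-preserving and runs in polynomial time, and $|X| = k$ on both sides of the equivalence.

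For Case (ii), the difficulty is that an FD in the output instance may target any column, whereas \FDfixed constrains the right-hand side. My plan is to construct $\rel'$ from $(R,\rel,a,k)$ by leaving the schema and parameter untouched, and adding, for each $b \in R \setminus \{a\}$, a pair of \emph{gadget rows} $r_b^0, r_b^1$ that agree on every attribute of $R \setminus \{b\}$ and disagree only on $b$, with all values on every column chosen fresh, i.e., distinct from any value appearing in $\rel$ or in any other gadget row. The pair $(r_b^0, r_b^1)$ then immediately invalidates every non-trivial dependency $X \fd b$ with $X \subseteq R \setminus \{b\}$ in~$\rel'$, so after the construction the only non-trivial FDs that can possibly survive in $\rel'$ have right-hand side $a$.

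The main thing to verify will be that FDs of the form $X \fd a$ behave identically in $\rel$ and in $\rel'$. Thanks to the freshness of the gadget values, a gadget row shares no value with any original row on any column, so no mixed pair introduces a new coincidence on a non-empty $X$. The only newly coinciding pair is $(r_b^0, r_b^1)$, which agrees on all of $R \setminus \{b\}$, and in particular on $a$ (since $b \neq a$), and therefore does not violate $X \fd a$ either. Consequently $X \fd a$ holds in $\rel'$ iff it holds in $\rel$, so the two instances are equivalent for every $k \geq 1$. The degenerate case $k = 0$ amounts to checking whether all rows of $\rel$ agree on $a$, which can be resolved directly in polynomial time and returned as a trivially (non-)satisfiable \FD instance. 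Both constructions are polynomial and preserve $k$, giving the required parameterized reductions.
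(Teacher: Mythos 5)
Your proposal is correct and follows essentially the same strategy as the paper: part (i) is identical (append a fresh column with pairwise distinct values and make it the right-hand side), and part (ii) likewise works by adding, for each $b \in R\setminus\{a\}$, rows that agree everywhere except on $b$ so as to falsify every non-trivial FD with right-hand side $b$ while leaving $X \fd a$ untouched. The only difference is cosmetic: the paper perturbs a copy of an existing row $r^*$ (setting $r_b[b] = \times$), which keeps $r_b[a] = r^*[a]$ and so covers $k=0$ automatically, whereas your all-fresh gadget pairs make the non-interference argument for $k \ge 1$ immediate but require the separate treatment of $k=0$ that you correctly supply.
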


\begin{proof}
	To prove \Cref{case:unique_to_fdfixed}, we add a single unique column to the database
	and fix it as the right-hand side of the sought functional dependency.
	Let $\rel = \{r_1, r_2, \dots, r_{|\rel|}\}$ be a database over schema $R$,
	and $a$ an attribute not previously in $R$.
	We construct $\rel'$ over $R \cup \{a\}$ by adding,
	for each $r_i$, the row $r'_i$ defined by $r'_i[R] = r_i[R]$ and $r'_i[a] = i$.
	The reduction maps an instance $(\rel,R,k)$ of \Unique
	to the instance $(\rel',R \cup \{a\}, a, k)$ of \FDfixed.
	Since, for any two distinct rows $r'_i,r'_j \in \rel'$, $i \neq j$,	we have $r'_i[a] \neq r'_j[a]$,
	the left-hand sides of the non-trivial, valid FDs $X \fd a$ in $\rel'$ are
	in one-to-one correspondence to the UCCs in $\rel$.
	
	To reduce \FDfixed to \FD, we need to mask all ``unwanted'' FDs
	with RHS different from the fixed attribute $a$.
	See \Cref{subfig:fd-reduction} for an example.
	Let again $\rel$ be the input database over $R$.
	To construct the resulting database $\rel'$ over the same schema $R$,
	we take all rows from $\rel$ and add $|R|-1$ new ones.
	Fix an arbitrary record $r^* \in \rel$ and let ${\times}$ be a new symbol
	that does not previously appear as a value.
	For each attribute $b \in R{\setminus}\{a\}$, we add the row $r_b$
	satisfying $r_b[R{\setminus}\{b\}] = r^*[R{\setminus}\{b\}]$ and $r_b[b] = {\times}$.
	The rows $r^*$ and $r_b$ now
	witness that any non-trivial FD $X \fd b$ fails in $\rel'$.
	
 	It is left to prove that $X \fd a$ holds in $\rel$ if and only if it holds in $\rel'$.
 	Evidently, any valid FD in $\rel'$ is also valid in the subset $\rel$.
 	Suppose $X \fd a$ holds in $\rel$ and let rows $r,s \in \rel'$
 	be such that $r[a] \neq s[a]$.
 	The only case where the conclusion $r[X] \neq s[X]$ may possibly be in doubt is 
 	if $r \in \rel'{\setminus}\rel$ and $s \in \rel$
 	(all new rows in $\rel'{\setminus}\rel$ agree on $a$).
	Hence, $r = r_b$ for some $b \neq a$.
	If $b \in X$, the new value ${\times}$ appears in the projection $r[X]$
	but not in $s[X]$;
	otherwise, we have $r[X] = r^*[X]$.
	Since $X \fd a$ holds for the pair $r^*,s  \in \rel$,
	the relation $r[X] = r^*[X] \neq s[X]$ follows.
\end{proof}

The next lemma proves that every instance of 
the unrestricted \FD problem can be expressed
as an equivalent Boolean formula in conjunctive normal form.
Since CNF formulas are exactly the \mbox{$2$-normalized} ones,
we obtain a reduction to \WS{2}.
This is the main result of this section.

\begin{lem}
  \label[lem]{lem:fd_to_ws2}
  There is a parameterized reduction from \emph{\FD} to the \emph{\WS{2}} problem.
\end{lem}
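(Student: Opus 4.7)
The plan is to encode a prospective functional dependency $X \fd a$ by two blocks of Boolean variables: one to select the left-hand side $X$ and one to pick the right-hand side $a$. Concretely, for a database $\rel$ over schema $R$ I introduce variables $x_b$ and $y_b$ for every $b \in R$, with the intended reading $x_b = \true \Leftrightarrow b \in X$ and $y_b = \true \Leftrightarrow b = a$. The target Hamming weight will be $k' = k+1$, so any satisfying assignment of that weight sets exactly $k+1$ variables to \true; the accompanying clauses force this count to split as $|X| = k$ $x$-variables and a single $y$-variable.

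The clauses are organized in three groups, each of them a disjunction of literals, hence the whole formula is a conjunction of disjunctions, i.e., $2$-normalized.
\begin{enumerate}
\item \emph{Right-hand side is chosen.} The single clause $\bigvee_{b \in R} y_b$ guarantees that at least one $y$-variable is true, so some attribute is designated as RHS.
\item \emph{Non-triviality.} For every $b \in R$ add $\neg x_b \vee \neg y_b$; this enforces $a \notin X$.
\item \emph{Validity of the functional dependency.} Recall that $X \fd a$ holds in $\rel$ iff, for every pair of distinct rows $r,s \in \rel$ with $a$ in their difference set $D_{r,s}$, the set $X$ intersects $D_{r,s}\setminus\{a\}$. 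I therefore add, for every pair of distinct rows $r,s \in \rel$ and every attribute $c \in D_{r,s}$, the clause
\[
  \neg y_c \;\vee\; \bigvee_{b \,\in\, D_{r,s}\setminus\{c\}} x_b.
\]
Equivalently, one may use only the minimal difference sets without changing correctness.
\end{enumerate}
All three groups together yield $\poly(|\rel|,|R|)$ clauses and $2|R|$ variables, so the reduction is polynomial in the input size; the new parameter $k' = k+1$ is bounded by a computable function of $k$, satisfying the requirements of a parameterized reduction.

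For correctness, suppose $(X,a)$ is a valid non-trivial FD in $\rel$ with $|X|=k$. Setting $x_b = \true$ precisely for $b \in X$, $y_a = \true$, and all other variables to \false\ yields weight exactly $k+1$ and satisfies each clause in (i)--(iii) by direct inspection. Conversely, any satisfying assignment of weight $k+1$ defines $X = \{b : x_b = \true\}$ and a non-empty set $A = \{b : y_b = \true\}$ disjoint from $X$; for any $a \in A$ the clauses of group~(iii) with $c=a$ precisely express that $X \fd a$ is valid, and $a \notin X$ gives non-triviality. If $|A| > 1$ then $|X| < k$, and one can pad $X$ with arbitrary further attributes of $R \setminus (X \cup \{a\})$ (available because the case $k \ge |R|$ is trivial) without destroying validity, producing an FD $X \fd a$ of size exactly $k$.

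The main technical point to get right is the validity clauses in group~(iii): they must simultaneously model the implication ``if $a$ is chosen as RHS, then $X$ hits $D_{r,s}\setminus\{a\}$'' for \emph{every} choice of $a$, which is why a separate clause is emitted for each $c \in D_{r,s}$. Every clause remains a single disjunction of literals, preserving the $2$-normalized form demanded by \WS{2}. Beyond this, the reduction is essentially a bookkeeping exercise; the only other subtlety is the weight accounting, handled cleanly by shifting the parameter from $k$ to $k+1$.
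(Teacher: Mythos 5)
Your construction is essentially identical to the paper's: the same two variable blocks $\var{LHS}$ and $\var{RHS}$, the same clause $\bigvee_b y_b$, the same disjointness clauses $\neg x_b \vee \neg y_b$, the same validity clauses $\neg y_c \vee \bigvee_{b \in D_{r,s}\setminus\{c\}} x_b$, and the same target weight $k+1$; the paper likewise allows several $y$-variables to be true in the backward direction and argues that every chosen RHS yields a valid FD. Your explicit padding step is handled in the paper by the earlier remark that size exactly $k$ and size at most $k$ coincide for $k < |R|$, so the proof is correct and follows the same route.
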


\begin{proof}
	Given a database $\rel$ over $R$, we derive a CNF formula
	that has a satisfying assignment of Hamming weight $k\nwspace{+}\nwspace{1}$
	if and only if there is a non-trivial FD
	with left-hand side of size $k$ that holds in $\rel$.
	We use two types of variables distinguished by their semantic purpose,
	$\var{LHS} = \{x_a \mid a \in R \}$ and 
	$\var{RHS} = \{y_a \mid a \in R \}$.
	Some variable $x_a$ from $\var{LHS}$ being set to \true 
	corresponds to the attribute $a$ appearing on the left-hand side
	of the sought functional dependency;
	for $y_a$ from $\var{RHS}$, this means the attribute is the right-hand side.

	We start with the RHS.
	It might be tempting to enforce
	that any satisfying assignment chooses exactly one variable from $\var{RHS}$.
	We prove below that this is not necessary and forgoing the $\Or(|R|^2)$ clauses
	representing this constraint allows for a (slightly) leaner construction. 
	However, we do have to ensure that at least one variable from $\var{RHS}$ is set to \true
	and the corresponding one in $\var{LHS}$ is \false.
	This is done by the clauses $C_{RHS} = \bigvee_{y_a \in \var{RHS}} y_a$
	and $C_a = \neg y_a \vee \neg x_a$ for each $a \in R$.
	The subformula $\varphi_{RHS}$ is their conjunction.

	We now model the LHS.
	For any attribute $a \in R$ and rows $r,s \in \rel$, let
 	\begin{equation*}
		C_{a,r,s} = \neg y_a \vee \bigvee_{\substack{b \in R{\setminus}\{a\}\\[.125em] r[b] \neq s[b]}} x_b.
	\end{equation*}

	\noindent
	Intuitively, the clause $C_{a,r,s}$ represents the fact
	that if $X \fd a$ is a valid, non-trivial FD, 
	then $X$ has to contain at least one attribute $b$, different from $a$,
	such that $r[b] \neq s[b]$.
  	From these clauses, we assemble the subformula	
	\begin{equation*}
    	\varphi_a = \bigwedge_{\substack{r,s \in \rel\\[.125em] r[a] \neq s[a]}} C_{a,r,s}.
	\end{equation*}
	The output of our reduction is the formula 
	$\varphi = \varphi_{RHS} \wedge \bigwedge_{a \in R} \varphi_a$.
	Indeed, $\varphi$ is in conjunctive normal form
	and has $\Or(|R| \nwspace |\rel|^2)$ clauses with at most $|R|$ literals each.
	An encoding of $\varphi$ is computable in time linear in its size.

	Regarding the correctness of the reduction, recall that we claimed $\varphi$
	to have a weight $k\nwspace{+}\nwspace{1}$ satisfying assignment iff
	a \mbox{non-trivial} functional dependency $X \fd a$ of size $k$ holds in $\rel$.
	Suppose that the latter is true.
	We show that setting the variable $y_a$ as well as all $x_b$ with $b \in X$ to \true
	(and all others to \false) satisfies $\varphi$.
	Note that the assignment automatically satisfies $C_{RHS}$
	and all $C_{b,r,s}$ with $b \neq a$.
	We are left with the subformula $\varphi_a$
	containing the clauses $C_{a, r, s}$ for row pairs with $r[a] \neq s[a]$.
	To distinguish these pairs, the LHS $X$ includes, for each of them,
	some attribute $b \in R{\setminus}\{a\}$ such that $r[b] \neq s[b]$.
	Clause $C_{a, r, s}$ is then satisfied by the corresponding literal $x_b$.
	
	For the other direction, we identify assignments with the variables they set to \true.
	Let $A \subseteq \var{LHS} \cup \var{RHS}$ be an assignment
	of Hamming weight $|A| = k\nwspace{+}\nwspace{1}$ that satisfies $\varphi$.
	The assignment induces two subsets of the schema $R$,
	namely, $X = \{a \in R \mid x_a \in A \cap \var{LHS}\}$ and
	$Y = \{a \in R \mid y_a \in A \cap \var{RHS}\}$.
	Due to the clause $C_{RHS}$, $Y$ is non-empty
	and $X$ contains at most $k$ elements.
	Moreover, $X$ and $Y$ are disjoint as the $C_a$ are all satisfied.
	We say that the generalized functional dependency $X \fd Y$ holds in a database
	if $X \fd a$ holds for every $a \in Y$.
	It is clearly enough to show that $X \fd Y$ indeed holds in $\rel$.
	
	Assume $X \fd a$ fails for some $a \in Y$.
	This is witnessed by a pair of rows $r,s \in \rel$ with $r[X] = s[X]$ but $r[a] \neq s[a]$,
	whence the clause $C_{a,r,s}$ is present in $\varphi$.
	Since $y_a \in A$ is in the assignment, the literal $\neg y_a$ evaluates to \false.
	Also, 	as $X$ is disjoint from the difference set 
	$\{ b \in R{\setminus}\{a\} \mid r[b] \neq s[b] \}$,
	no other literal satisfies $C_{a,r,s}$,
	which is a contradiction.
\end{proof}

We have established a chain of parameterized reductions between the dependency detection problems
of unique column combinations and functional dependencies.
The fact that the endpoints \HS and \WS{2} are both $\W[2]$-complete 
shows that all of the problems are, this completes the proof of \Cref{thm:unique_fd_W[2]}.

\subsection{Approximation and Discovery}
\label{subsec:UCC_discovery}

\noindent
Our reductions have implications beyond the scope of parameterized complexity.
Observe that the transformations in \Cref{lem:hs_to_unique,lem:unique_to_fd}
can be computed in polynomial time and preserve approximations with respect to the
solution size $k$ or $|X|$, respectively.
Also, recall that the minimization version of \HS is \NP-hard to approximate 
within a factor of $(1-\varepsilon) \ln |V|$
for every $\varepsilon > 0$~\cite{Dinur14ParallelRepetition}.
As a consequence, the minimization versions of \Unique, \FDfixed,
and \FD all inherit the same hardness of approximation.
Previously, an approximation-preserving reduction was known
only for the \Unique problem, starting from \textsc{Set Cover}~\cite{Akutsu96ApproxMinKey}.

Rather than approximating minimum solutions, we are mainly interested in the discovery
of minimal dependencies in databases.
Traditionally, enumeration has been studied via embedded decision problems
that are different from those defined in \Cref{subsec:UCC_problem_definitions}.
Instead, the \TransHyp problem (enumerating minimal hitting sets)
has been associated with the problem of deciding for two hypergraphs $\Hyp$ and $\Gyp$
whether $\Gyp  = \Tr(\Hyp)$ or, equivalently, $\Hyp = \Tr(\Gyp)$, called the \textsc{Dual} problem.
\EnumUnique analogously
corresponds to decide for a database $\rel$ and hypergraph $\Hyp$,
whether $\Hyp$ consists of all minimal uniques of $\rel$.
Intuitively, this formalizes the decision whether
an enumeration algorithm has found all solutions.

Both decision problems are in \co\NP and it was proven 
by Eiter and Gottlob~\cite{EiterGottlob95RelatedProblems} that they are many-one equivalent.
Using a lifting result by Bioch and Ibaraki~\cite{Bioch95Identification},
this shows that minimal hitting sets can be enumerated in output-polynomial time
if and only if minimal UCCs can,
which is the case if and only if \textsc{Dual} is in \P.
Such an equivalence is theoretically appealing and has lead to the quasi-output-polynomial
upper bound on the running time of hitting set/UCC enumeration
that is currently the best known~\cite{FredmanKhachiyan96Dualization}.
The connection to enumeration has also inspired the intriguing result that \textsc{Dual} is likely not \co\NP-hard as it can be solved
in polynomial time when given access to $\Or(\sfrac{\log^2 N}{\log\log N})$ suitably guessed
non\-deterministic bits~\cite{Beigel99BoundedNondeterminism,EiterGottlobMakino03NewResults},
where $N$ denotes the total input size of the pair $(\Hyp, \Gyp)$.
There are classes of hypergraphs for which \textsc{Dual} is indeed in \P, see for
example~\cite{Boros98Subimplicants,DomingoMishraPitt99DualizationLearningMembershipQueries,
EiterGottlobMakino03NewResults,Peled94DualRegularBooleanFunction},
or at least in \FPT with respect to certain structural parameters~\cite{ElbassioniHagenRauf08FPTHypergraphDuality}.
For a much more thorough overview of decision problems associated with enumeration,
see the recent work by Creignou et al.~\cite{Creignou19ComplexityTheoryHardEnumerationProblems}.

Unfortunately, the approach described above holds only limited value
when it comes to designing practical algorithms.
Imagine an implementation of the discovery of minimal UCCs of a database $\rel$
via repeated checks whether the hypergraph $\Hyp$ of previously found solutions is already complete.
Such an algorithm is bound to use an amount of memory that is exponential in the size of $\rel$.
This is due to the fact that some databases have exponentially many minimal solutions 
and the decision subroutine at least has to read all of $\Hyp$.
Note that such a large memory consumption is not at all necessary
as there are algorithms known for \TransHyp
whose space complexity is only linear in the input size~\cite{ElbassioniHagenRauf08FPTHypergraphDuality,Murakami14Dualization}.
In fact, enumeration algorithms are often analyzed
not only with respect to their running time, but also in terms of space consumption,
see~\cite{CapelliStrozecki19Incremental,Conte20SublinearSpaceMaximalCliqueEnumeration}.
For data profiling problems like \EnumUnique on the other hand,
space-efficient algorithms
have only recently started to received some attention~\cite{Birnick20HPIValid,Blaesius19EfficientlyALENEX,Koehler16Possible}.

In the following, we simplify and at the same time extend the above equivalences
making them usable in practice,
namely, we prove \Cref{thm:equiv_transHyp}.
It states the existence of parsimonious reductions, in both directions,
that relate the input instances directly on the level of the enumeration problems,
without decision problems as intermediaries.
This way, we characterize the enumeration complexity of unique column combinations
as well as functional dependencies, both with fixed and arbitrary right-hand side.
Curiously, our insights on enumeration also stem from the study of decision problems,
however, the results are entirely lifted.

The reductions between the decision problems
\HS, \Unique, \FDfixed, and \FD (in that order) for \emph{minimum} dependencies
described in \Cref{lem:hs_to_unique,lem:unique_to_fd}
are all built on bijective correspondences between the solutions.
The running times of the reductions are polynomial and independent of the given budget $k$.
Finally, the mappings of the solution spaces also preserve set inclusions.
This means, the same input transformations applied to the discovery of \emph{minimal} dependencies
are in fact parsimonious reductions from \TransHyp to \EnumUnique and 
further to \textsc{Enumerate Minimal FDs/FDs}\textsubscript{\textit{fixed RHS}}.
Regarding the inverse direction, \Cref{obs:UCCs_and_transversals}
shows that the enumeration of UCCs is at most as hard as that of hitting sets,
that is, \EnumUnique and \TransHyp are equivalent.
It is worth noting that the parsimonious reductions increase the size of the instances
at most by a polynomial factor (in most case a constant one) \emph{in the input size only}
and therefore transfer the space complexity of any enumeration algorithm
from one side to the other.

To complete the proof of \Cref{thm:equiv_transHyp},
we still need the following lemma that characterizes the
complexity of functional dependency discovery
in terms of the \TransHypUnion problem.

\begin{lem}
\label[lem]{lem:parsimonious_transHyp_union}
	The problems \emph{\EnumFD} and \emph{\TransHypUnion} are equivalent under parsimonious reductions.
	Moreover, there is a parsimonious reduction from \emph{\EnumFDfixed} to \emph{\TransHyp}.
\end{lem}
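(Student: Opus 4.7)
The plan is to lift the hitting-set characterization of functional dependencies to the enumeration level, in parallel with how \Cref{obs:UCCs_and_transversals} treats unique column combinations. The key semantic observation is that for any database $\rel$ over schema $R$ and any attribute $a \in R$, a set $X \subseteq R{\setminus}\{a\}$ is the left-hand side of a valid, non-trivial FD $X \fd a$ if and only if $X$ is a transversal of the hypergraph $\Hyp_a$ on vertex set $R{\setminus}\{a\}$ whose edges are the sets $\{b \in R{\setminus}\{a\} : r[b] \neq s[b]\}$ as $(r, s)$ ranges over the row pairs of $\rel$ with $r[a] \neq s[a]$. Set inclusion is preserved by this correspondence, so the minimal left-hand sides are exactly the edges of $\Tr(\Hyp_a)$.

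The forward reductions then come out immediately. For \EnumFDfixed with input $(\rel, a)$, I would let $f$ output $\Hyp_a$ and take $g$ to be the identity on subsets of $R{\setminus}\{a\}$, which is a parsimonious reduction to \TransHyp. For \EnumFD with input $\rel$, $f$ outputs the tuple $(\Hyp_a)_{a \in R}$ and $g$ attaches to each transversal $T \in \Tr(\Hyp_a)$ the dependency $T \fd a$. Tagging every transversal with its associated right-hand side supplies the disjoint-union encoding required by \TransHypUnion, as deciding which component a tagged set belongs to is trivially polynomial.

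For the reverse direction \TransHypUnion $\leq_p$ \EnumFD, I would stack the gadget from \Cref{lem:hs_to_unique} for every input hypergraph and suppress dependencies on the ``wrong'' right-hand sides by the masking technique from the proof of \Cref{lem:unique_to_fd}. Concretely, given $(\Hyp_1, \dots, \Hyp_d)$ (which, after minimization and renaming, we may assume to be Sperner and to have pairwise disjoint vertex sets $V_1, \dots, V_d$), form the schema $R = V_1 \cup \dots \cup V_d \cup \{a_1, \dots, a_d\}$ and populate $\rel$ with: a single all-zero anchor row $r_0$; for each edge $E$ of $\Hyp_i$ an edge-row carrying a fresh nonzero marker on the attributes of $E$ and on $a_i$, and $0$ elsewhere; and for each $b \in V_1 \cup \dots \cup V_d$ a masking row $r_b^*$ that agrees with $r_0$ off position $b$ and holds a novel symbol $\star$ at $b$.

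The main obstacle will be verifying that this database produces exactly the intended family of minimal FDs and no extras. Two things have to be checked. First, the witness pair $(r_0, r_b^*)$ forbids any non-trivial FD with right-hand side $b \in V_l$, so only the $a_i$ can appear as RHS. Second, for each $a_i$, the within-block pair $(r_0, r^{i,j})$ contributes the edge $E_j^i$ to $\Hyp_{a_i}$, while every other row pair with differing $a_i$-values yields a strict superset of some $E_j^i$: two edge-rows of the same block produce $E_j^i \cup E_{j'}^i$, cross-block pairs additionally pick up an auxiliary attribute $a_l$ for some $l \neq i$, and any pair involving a masking row at best reproduces some $E_j^i$ itself. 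After minimization, $\Hyp_{a_i}$ therefore coincides with $\Hyp_i$, and the minimal non-trivial valid FDs of $\rel$ are in bijection with $\Tr(\Hyp_1) \,\dot\cup\, \dots \,\dot\cup\, \Tr(\Hyp_d)$ via $(T, a_i) \leftrightarrow T$. The back-translation $g$ merely attaches the correct tag, closing the parsimonious equivalence.
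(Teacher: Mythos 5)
Your proposal is correct and follows essentially the same route as the paper: the same punctured-difference-set characterization drives the forward reductions, and the reverse direction uses the same gadget (all-zeros anchor row, edge-rows with fresh markers tied to an indicator attribute per hypergraph, and masking rows that kill all right-hand sides in $V$). The only deviations are cosmetic\emDash{}you make the vertex sets disjoint where the paper merges them into one, and you place the fresh marker on the indicator attribute where the paper uses a constant $1$\emDash{}and your case analysis of the resulting difference sets matches the paper's.
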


\begin{proof}
	This proof uses techniques that already helped to establish the previous lemmas of this section.
	Let $\rel$ be a database over schema $R$, $a \in R$ some attribute,
	$r,s \in \rel$ two rows with $r[a] \neq s[a]$.
	Recall that their difference set is $D(r,s) = \{b \in R \mid r[b] \neq s[b]\}$.
	We define their \emph{punctured difference set} to be $D(r,s){\setminus}\{a\}$.
	It is implicit in the proof of \Cref{lem:fd_to_ws2}\emDash{}and
	easy to verify from the definition of functional dependencies\emDash{}that
	a set $X \subseteq R{\setminus}\{a\}$ is the left-hand side of a
	valid, minimal, non-trivial FD $X \fd a$ if and only if it is a minimal hitting set 
	for the hypergraph $\Dyp_a = \{ D(r,s){\setminus}\{a\} \mid r,s \in \rel; r[a] \,{\neq}\, s[a] \}$
	of punctured difference sets.

	Transforming the input database $\rel$ over schema $R = \{a_1, \dots, a_{|R|}\}$
	into the $|R|$ hypergraphs $\Dyp_{a_1}, \dots, \Dyp_{a_{|R|}}$
	is a parsimonious reduction from the \EnumFD problem to \TransHypUnion.
	In the same fashion, fixing the desired right-hand side in the input reduces 
	\EnumFDfixed to \TransHyp.
	
	\begin{figure}
\centering
	\includegraphics[page=1,scale=.93]{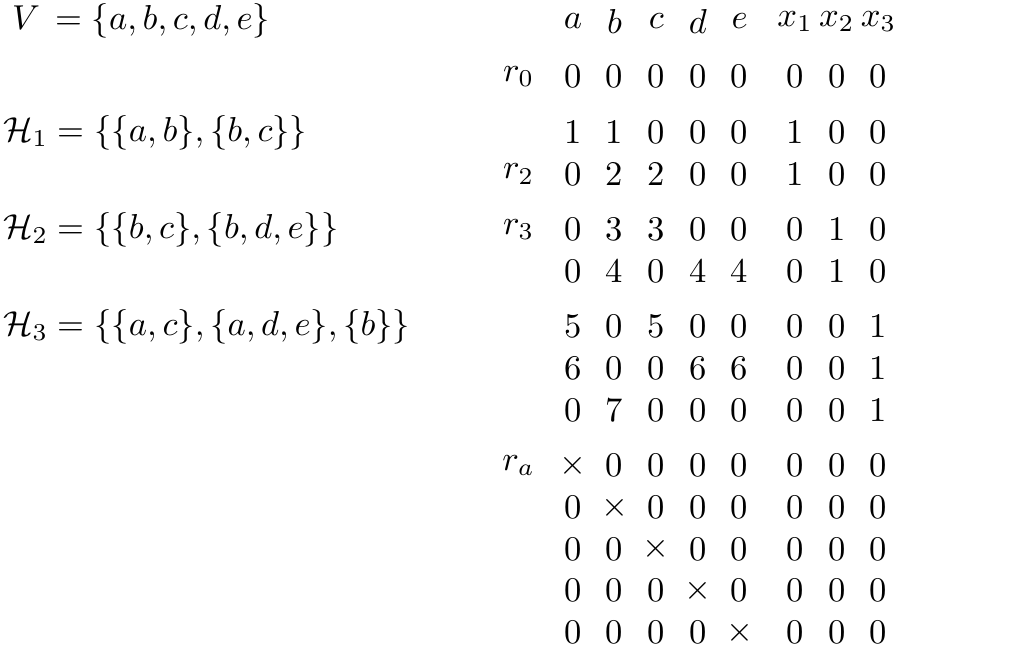}
	\caption{Illustration of \Cref{lem:parsimonious_transHyp_union}.
		The three hypergraphs $\Hyp_1, \Hyp_2, \Hyp_3$ on the vertex set $V$ are on the left
		and the equivalent database $\rel$ is on the right.
		The hypergraphs $\Hyp_1$ and $\Hyp_2$ share the edge $\{b,c\}$,
		but this results in two rows $r_2$ and $r_3$.
		The corresponding transversal hypergraphs are
		$\Tr(\Hyp_1) = \{ \{a,c\}, \{b\}\}$,
		$\Tr(\Hyp_2) = \{ \{b\}, \{c,b\}, \{c,e\}\}$,
		and $\Tr(\Hyp_3) = \{ \{a,b\}, \{b,c,d\}, \{b,c,e\}\}$.
		The functional dependencies $b \fd x_1$ and $b \fd x_2$ indeed hold in $\rel$,
		and adding the attribute $a$ gives $ab \fd x_3$.
		The rows in the last block eliminate all non-trivial functional dependencies
		$X \fd v$ with $v \in V$.}
\label{fig:transHyp_union}
\end{figure}
	
	The opposite reduction is the main part of the lemma.
	We are given hypergraphs $\Hyp_1, \dots, \Hyp_d$,
	without loss of generality all on the same vertex set $V\!$,
	and we need to compute some database $\rel$ such that its valid, non-trivial
	functional dependencies are in one-to-one correspondence with the hitting sets 
	of the $\Hyp_i$.
	An example can be seen in \Cref{fig:transHyp_union}.
	As the relational schema, we take the set $R = V \cup \{x_1, \dots, x_d\}$,
	where the $x_i$ are attributes not previously appearing in $V\!$.
	The construction of $\rel$ starts similarly to \Cref{lem:hs_to_unique}.
	We first add the all-zeros row $r_0$ (with $r_0[a] = 0$ for every $a \in R$).
	Let $m = \sum_{i = 1}^d |\Hyp_i|$ be the total number of edges
	and $E_1, E_2, \dots, E_m$ an arbitrary numbering of them.
	Note that if the same set of vertices is an edge of multiple hypergraphs,
	it appears in the list with that multiplicity.
	For every edge $E_j$, we add the following row $r_j$,
	\begin{equation*}
		r_j[a] =
			\begin{cases}
				j, & \text{if } a \in E_j;\\
				0, & \text{if } a \in V{\setminus}E_j;\\
				1, & \text{if } a = x_i \text{ such that } E_j \in \Hyp_i;\\
				0, & \text{otherwise.}
			\end{cases}
	\end{equation*}
	In other words, the subtuple $r_j[V]$ is the characteristic vector of $E_j$
	only that its non-zero entries are $j$ instead of $1$;
	the subtuple $r_j[\{x_1, \dots, x_d\}]$ has exactly one $1$ at the position 
	corresponding to the hypergraph containing $E_j$.
	The remaining construction of database $\rel$
	uses an idea of \Cref{lem:unique_to_fd}~\ref{case:fdfixed_to_fd}.
	Let $\times$ be a new symbol.
	For every vertex $v \in V$, we add the row $r_v$ 
	with $r_v[v] = \times$ and $r_v[a] = 0$
	for all other attributes $a \in R{\setminus}\{v\}$.
	The database $\rel$ can be obtained in time $\poly(m,|V|)$.
	
	We claim that the minimal, valid, non-trivial functional dependencies of $\rel$
	are exactly those having the form $T \fd x_i$ with $T \in \Tr(\Hyp_i)$.
	The existence of a parsimonious reduction from \TransHypUnion to \EnumFD easily follows from that.
	Let $X \subseteq R$ and \mbox{$a \in R{\setminus}X$} be such that
	the FD $X \fd a$ holds in $\rel$ and is minimal.
	For any $v \in V$, the rows $r_0$ and $r_v$ differ only in attribute $v$,
	therefore $v$ is not the right-hand side of any non-trivial FD,
	whence $a = x_i$ for some $1 \le i \le d$.
	As seen above, 
	the set $X$ must be a minimal transversal of the hypergraph
	$\Dyp_{x_i} = \{ D(r,s){\setminus}\{a\} \mid r,s \in \rel; r[x_i] \,{\neq}\, s[x_i] \}$.
	We are left to prove that $\Dyp_{x_i}$ has the same minimal transversals as $\Hyp_i$.
	Let $r,s \in \rel$ be rows that differ in attribute $x_i$,
	say, $r[x_i] = 1$ and $s[x_i] = 0$.
	We thus have $r = r_j$ for some $1 \le j \le m$.
	The rows $r$ and $s$ share only the value $0$, if any.
	Therefore,
	\begin{equation*}
		D(r,s) =
			\begin{cases}
				E_j \cup \{x_i\}, & \text{ if } s = r_0;\\
				E_j \cup E_k \cup \{x_i, x_\ell\}, & \text{ if } s = r_k 
					\text{ for } 1\,{\le}\,k\,{\le}\,m \text{ such that } E_k \in \Hyp_\ell;\\
				E_j \cup \{x_i,v\}, & \text{ if } s = r_v \text{ for } v \in V\!.
			\end{cases}
	\end{equation*}
	In the second case, note that $\ell \neq i$ since $s[x_i] = 0$.
	The above implies that $\Hyp_i \subseteq \Dyp_{x_i}$;
	moreover, all edges in $\Dyp_{x_i}{\setminus}\Hyp_i$ are supersets of ones in $\Hyp_i$.
	The respective minimizations $\min(\Dyp_{x_i}) = \min(\Hyp_i)$ are thus equal and,
	by duality, also their transversal hypergraphs $\Tr(\Dyp_{x_i}) = \Tr(\Hyp_i)$
	are the same.
\end{proof}
 
It is known that \EnumFD can be solved in output-polynomial time
if and only if \TransHyp can~\cite{EiterGottlob95RelatedProblems},
this has been established along the same lines as discussed in the remarks
preceding~\Cref{lem:parsimonious_transHyp_union}.
Notably, Eiter and Gottlob additionally presented an alternative construction
in the extended version of~\cite{EiterGottlob95RelatedProblems}
that is \emph{almost} parsimonious.
The only condition they needed to relax is the bijection between the solution spaces.
They transform a database over schema $R$ into some hypergraph $(R^2, \Fyp)$
such that the majority of its minimal hitting sets indeed
correspond to the functional dependencies with arbitrary right-hand side.
However, $\Fyp$ has some $\Or(|R|^4)$ excess solutions (that is,
polynomially many in the input size only),
which do not have an FD counterpart, but are easily recognizable.
We leave it as an open problem to give a fully parsimonious reduction between the problems.
We have shown above that this is equivalent to encoding
the hitting set information of $|R|$ different hypergraphs into a single one.

\section{Inclusion Dependencies}
\label{sec:INDs}

\noindent
We now discuss inclusion dependencies in relational databases.
We show that their detection problem, when parameterized by the solution size,
is one of the first natural problems to be complete for the class $\W[3]$.
We do so by proving its FPT-equivalence 
with the weighted satisfiability problem for a certain fragment of propositional logic.
Later in \Cref{subsec:IND_discovery}, we transfer our results 
to the discovery of maximal inclusion dependencies.

\subsection{Problem Definitions}
\label{subsec:IND_problem_definitions}

\noindent
A Boolean formula is \emph{antimonotone} if it only contains negative literals.
We identify a variable assignment with those variables that are set to \true.
In the case of antimonotone formulas, this means that
the satisfying assignments are closed under arbitrarily 
turning variables to \false, that is, taking subsets.
The empty assignment that assigns \false to all variables
is always satisfying.
Recall that a formula is \mbox{$3$-normalized}
if it is a conjunction of disjunctions of conjunctions of literals
or, equivalently, if it is a conjunctions of subformulas in disjunctive normal form (DNF).
An example of an antimonotone, $3$-normalized formula is
\begin{equation*}
	((\neg x_1 \wedge \neg x_2 \wedge \neg x_4) \vee (\neg x_3 \wedge \neg x_4)) 
	\wedge ((\neg x_1 \wedge \neg x_3) \vee (\neg x_2 \wedge \neg x_5) \vee (\neg x_1 \wedge \neg x_4 \wedge \neg x_5)).
\end{equation*}
This formula admits satisfying assignments of Hamming weight~$0$, $1$, and $2$, but none of larger weight.

The \WANS problem is the special case of \WS{3}
restricted to antimonotone formulas.

\vspace*{.6em}
\noindent\WANS (\wans)
\begin{description}
	\item [Instance:] An antimonotone, $3$-normalized Boolean formula $\varphi$
	
		and a non-negative integer $k$.
	\item [Parameter:] The non-negative integer $k$.
	\item [Decision:] Does $\varphi$ admit a satisfying assignment of Hamming weight $k$?
\end{description}

\noindent
By the above remark, this is the same as asking for an assignment of weight at least $k$.
The Antimonotone Collapse Theorem of Downey and 
Fellows~\cite{DowneyFellows95FPTCompletenessI,DowneyFellows95FPTCompletenessII}
implies that the \wans special case is \mbox{$\W[3]$-complete} on its own.

The inclusion-wise maximal satisfying assignments carry the full information about the collection
of all satisfying assignments.
It is therefore natural to define the corresponding enumeration problem as follows.

\vspace*{.6em}
\noindent\EnumWANS
\begin{description}[labelwidth = 2.45cm]
	\item [Instance:] An antimonotone, $3$-normalized Boolean formula $\varphi$.
	\item [Enumeration:] List all maximal satisfying assignments of $\varphi$.
\end{description}

For inclusion dependencies, the situation is similar.
Every subset of a valid IND is also valid.
Asking for a dependency of size exactly $k$ 
is thus the same as asking for one of size at least $k$.
We define two variants of the decision problem,
similar as we did with functional dependencies.
The more restricted variant requires the two databases to have the same schema
with the identity mapping between columns.

\vspace*{.6em}
\noindent\INDfixed
\begin{description}
	\item [Instance:] Two relational databases $\rel$, $\sel$ over schema $R$
	
		and a non-negative integer $k$.
	\item [Parameter:] The non-negative integer $k$.
	\item [Decision:] Is there a set $X \subseteq R$ with $|X| = k$ such that
	
		$\rel[X] \subseteq \sel[X]$ is an inclusion dependency?
\end{description}

\vspace*{.6em}
\noindent\IND
\begin{description}
	\item [Instance:] Two relational databases, $\rel$ over schema $R$ and $\sel$ over $S$,
			
			and a non-negative integer $k$.
	\item [Parameter:] The non-negative integer $k$.
	\item [Decision:] Is there a set $X \subseteq R$ with $|X| = k$ 
			and an injective mapping
			
			$\sigma \colon X \to S$ such that $\rel[X] \subseteq \sel[\sigma(X)]$ is an inclusion dependency?
\end{description}

\noindent
The unparameterized variant of the general \IND problem is \NP-complete
already for pairs of binary databases~\cite{Kantola92FDsAndINDs}.

The solutions of \INDfixed are mere subsets of the underlying schema,
therefore it is clear what we mean by a maximal solution.
The case of the general \IND problem is slightly more intricate.
Recall from \Cref{subsec:prelims_databases}
that we say a general inclusion dependency $(X,\sigma)$ is maximal 
if there is no other IND $(X',\sigma')$ such that $X \subsetneq X'$ is a proper subset
and $\sigma$ is the restriction of $\sigma'$ to $X$.
Note that the pair $(X',\tau)$ with an alternative mapping $\tau$ 
might still be a valid inclusion dependency.
This leads to the following enumeration problems.

\vspace*{.6em}
\noindent\EnumINDfixed
\begin{description}[labelwidth = 2.45cm]
	\item [Instance:] Two relational databases $\rel$, $\sel$ over the same schema.
	\item [Enumeration:] List all maximal valid inclusion dependencies between $\rel$ and $\sel$
	
		\hspace*{0.3cm} with the identity mapping between the columns.
\end{description}

\vspace*{.6em}
\noindent\EnumIND
\begin{description}[labelwidth = 2.45cm]
	\item [Instance:] Two relational databases $\rel$ and $\sel$.
	\item [Enumeration:] List all maximal valid inclusion dependencies between $\rel$ and~$\sel$.
\end{description}

\subsection{Membership in $W[3]$}
\label{subsec:ind-is-in-w3}

\noindent
We show that both variants of the \IND decision problem are contained in the class $\W[3]$.
Recall that the \INDfixed problem 
restricts the input to pairs $(\rel,\sel)$ of databases over the same schema
and forbids solutions in which the set of values $\rel[a]$ of one column 
are contained in $\sel[b]$ for some other column $b \neq a$.
As a first step, we show (not entirely surprisingly) that this variant is at most as hard
as the general problem.

\begin{lem}
\label[lem]{lem:ind-with-or-without-mapping}
	There is a parameterized reduction from \emph{\textsc{Inclusion}}
	\emph{\textsc{Depen\-dency}\textsubscript{\textit{Identity}}} to \emph{\IND}.
\end{lem}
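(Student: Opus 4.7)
The plan is to augment both databases with a small signature gadget that pins each column of $\rel$ to its twin in $\sel$, so that the extra freedom of the general \IND problem (to choose an arbitrary injective column mapping $\sigma$) collapses to the identity.

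Given the input $(\rel,\sel,k)$ of \INDfixed over a common schema $R$, I would pick $2|R|$ fresh, pairwise distinct values $\{v_a, g_a : a \in R\}$ that do not appear in either database. For every $a \in R$ I add a single row $r_a$ to $\rel$ with $r_a[a] = v_a$ and $r_a[b] = g_b$ for $b \neq a$, together with the analogous row $s_a$ in $\sel$. Call the augmented databases $\rel'$ and $\sel'$; both still have schema $R$, only $|R|$ rows are added to each, the construction runs in polynomial time, and the parameter $k$ is left untouched. The output of the reduction is $(\rel',\sel',k)$ viewed as an instance of the general \IND problem; this is in fact a polynomial many-one reduction that preserves the parameter.

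For the forward direction, if $X \subseteq R$ with $|X|=k$ is a valid identity-IND in $(\rel,\sel)$, then $(X,\mathrm{id}_X)$ is valid in $(\rel',\sel')$: original rows of $\rel$ still find their old witnesses in $\sel \subseteq \sel'$; for $a \in X$ the gadget row $r_a$ is matched by $s_a$, which agrees with it on $v_a$ in column $a$ and on $g_b$ for every $b \in X\setminus\{a\}$; and for $a \notin X$ the projection $r_a[X]$ consists purely of $g_b$'s and is again matched by $s_a$. For the backward direction, let $(X,\sigma)$ be any valid IND in $(\rel',\sel')$ with $|X|=k$. For every $a \in X$, the row $r_a$ must have a witness $s \in \sel'$ with $s[\sigma(a)] = v_a$; but the signature value $v_a$ appears in $\sel'$ only in column $a$ of row $s_a$, forcing $\sigma(a) = a$. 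Hence $\sigma = \mathrm{id}_X$, and since any $r \in \rel$ uses no $v$- or $g$-values, its witness in $\sel'$ cannot be a gadget row and therefore lies in $\sel$, yielding $\rel[X] \subseteq \sel[X]$ in the original instance.

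The main design obstacle is to make the gadget simultaneously force the mapping to be the identity and leave the set of valid identity-INDs unchanged. Using a separate column-specific filler $g_b$ in each non-signature position (rather than a single constant) is what ensures that the gadget row $r_a$ finds a matching gadget row $s_a$ for \emph{every} choice of $X$, whether or not $a \in X$, so no valid identity-IND is destroyed by the augmentation; freshness of the signatures $v_a$ with respect to $\rel \cup \sel$ is what cleanly separates original rows from gadget rows in the backward direction.
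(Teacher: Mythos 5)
Your proof is correct and uses essentially the same mechanism as the paper: fresh column-specific signature values that occur in $\sel'$ only in their own column, forcing $\sigma$ to be the identity while leaving the identity-INDs intact. The paper achieves this more economically with a single added row $t^{\times} = (\times_a)_{a \in R}$ appended to both databases, rather than your $|R|$ gadget rows with filler values, but the argument is the same.
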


\begin{proof}
	Let $\rel$ and $\sel$ be two databases over the schema $R$ and
	let $t^{\times} = (\times_a)_{a \in R}$ be a new row,
	where the $\times_a$ are $|R|$ different
	symbols none of which are previously used anywhere in $\rel$ or $\sel$.
	In the restricted setting, an inclusion dependency
	is a set $X \subseteq R$ of columns such that $\rel[X] \subseteq \sel[X]$.
	It is easy to see that $(\rel,\sel)$ has such an inclusion dependency of size $k$, for any $k$,
	if and only if $(\rel \cup \{t^{\times}\}, \sel \cup \{t^{\times}\})$
	has an inclusion dependency of the same size with an arbitrary mapping between the columns
	since $\times_a \in \sel[b]$ holds iff $a = b$.
	The lemma follows from here.
\end{proof}

To demonstrate the membership of the general problem in $\W[3]$, we reduce is to \wans.
Namely, we compute from the two databases an antimonotone, $3$-normalized formula
which has a weight $k$ satisfying assignment if and only if the 
databases admit an inclusion dependency of that cardinality.
For this, we use a correspondence between \emph{pairs} of attributes and Boolean variables.

\begin{lem}
\label[lem]{lem:ind-to-wans-without-mapping}
 	There is a parameterized reduction from \emph{\IND} to \emph{\WANS}.
\end{lem}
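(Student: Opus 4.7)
The plan is to introduce a Boolean variable $x_{a,b}$ for every pair $(a,b) \in R \times S$, with the intended reading that $x_{a,b}$ is \true iff $a \in X$ and $\sigma(a)=b$. A candidate IND of size~$k$ should then correspond to a weight-$k$ satisfying assignment whose true variables describe an injective partial map $R \rightharpoonup S$ witnessing the inclusion. Since the \WANS problem itself imposes the cardinality constraint $k$, we only need to forbid ``bad'' combinations of selected pairs, which fits the antimonotone regime.

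First, I encode injectivity of $\sigma$ by antimonotone $2$-clauses: for every $a \in R$ and distinct $b,b' \in S$ the clause $\neg x_{a,b} \vee \neg x_{a,b'}$, and for every $b \in S$ and distinct $a,a' \in R$ the clause $\neg x_{a,b} \vee \neg x_{a',b}$. Under any satisfying assignment of weight $k$, these clauses force the selected pairs to form a partial matching, so they determine a set $X \subseteq R$ of size exactly $k$ together with an injection $\sigma\colon X \to S$. Second, I encode the inclusion condition $\rel[X] \subseteq \sel[\sigma(X)]$ by one DNF per row: for each $r \in \rel$ let
\[
	\psi_r \;=\; \bigvee_{s \in \sel}\; \bigwedge_{\substack{(a,b)\in R \times S\\ r[a]\,\neq\,s[b]}} \neg x_{a,b}.
\]
Intuitively, $\psi_r$ says that some $s \in \sel$ witnesses $r$, in the sense that no selected pair $(a,b)$ has $r[a]\neq s[b]$. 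The output of the reduction is the conjunction $\varphi$ of all injectivity clauses and all formulas $\psi_r$; it is antimonotone, $3$-normalized (a conjunction of disjunctions of conjunctions of negated literals), has $|R|\,|S|$ variables and $\Or(|R|\,|S|^2 + |R|^2\,|S| + |\rel|\,|\sel|\,|R|\,|S|)$ total size, and is computable in polynomial time. The parameter $k$ is passed through unchanged.

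For correctness, if $(X,\sigma)$ is an IND of size $k$, the assignment $x_{a,b}\,{=}\,\true$ iff $a \in X$ and $b=\sigma(a)$ has weight $k$, satisfies all injectivity clauses since $\sigma$ is injective, and for each $r$ the IND provides a witness $s\in \sel$ with $r[a]=s[\sigma(a)]$ for all $a\in X$, so the corresponding conjunct of $\psi_r$ is \true. Conversely, any weight-$k$ satisfying assignment of $\varphi$ picks by injectivity a partial matching of size~$k$, yielding $X$ and $\sigma$; for each row $r$, the satisfied disjunct of $\psi_r$ exhibits an $s \in \sel$ such that $r[a]=s[\sigma(a)]$ holds for every selected $(a,\sigma(a))$, giving exactly the IND condition.

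The mildly delicate point\emDash{}and the step I expect to require the most care\emDash{}is that antimonotone logic can only forbid, not force: injectivity, the exact cardinality, and the existence of a witness row $s$ must all be expressed without positive literals. Injectivity and witness existence survive because both conditions are naturally antimonotone (removing a pair from the selection can only help), while the ``exactly $k$'' requirement is carried by the \wans cardinality constraint itself rather than by a positive clause. Combined with \Cref{lem:ind-with-or-without-mapping}, this places both variants of \IND in $\W[3]$, and the Antimonotone Collapse Theorem of Downey and Fellows then yields $\W[3]$-completeness once the matching hardness reduction is established.
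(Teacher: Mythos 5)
Your reduction is correct and is essentially identical to the paper's: the same pair-variables $x_{a,b}$, the same antimonotone $2$-clauses enforcing a partial injection, and the same per-row DNF $\bigvee_{s}\bigwedge_{r[a]\neq s[b]}\neg x_{a,b}$ encoding the inclusion, with the parameter passed through unchanged. No substantive differences to report.
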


\begin{proof}
	Let $R = \{a_1, \dots, a_{|R|}\}$ and $S = \{b_1, \dots, b_{|S|}\}$ be two schemas.
	We introduce a Boolean variable $x_{i, j}$ for each pair of attributes $a_i \in R$ and $b_j \in S$.
	We let $\var{P}$ denote the set of variables
	corresponding to a collection $P \subseteq R \times S$ of such pairs.
	Consider a subset $X \subseteq R$ together with an injection $\sigma \colon X \to S$.
	From this, we construct a truth assignment including the variable $x_{i, j}$ (setting it to \true)
	iff $a_i \in X$ and $\sigma(a_i) = b_j$.
	The resulting assignment has weight $|X|$ and the collection of all possible configurations
	$(X,\sigma)$ is uniquely described by $\var{R \times S}$
	and the truth assignments obtained this way.
	Moreover, these assignments all satisfy the following antimonotone Boolean formula $\varphi_{map}$.
	\begin{align*}
		\varphi_{map} = 
			\left(\bigwedge_{i = 1}^{|R|} \bigwedge_{j = 1}^{|S| - 1} \bigwedge_{j' = j+1}^{|S|}
				(\neg x_{i, j} \vee \neg x_{i, j'})\right)
			\  \wedge \ 
			\left(\bigwedge_{j = 1}^{|S|} \bigwedge_{i = 1}^{|R| - 1} \bigwedge_{i' = i+1}^{|R|}
				(\neg x_{i, j} \vee \neg x_{i', j})\right).
  \end{align*}

	\noindent
	The first half of $\varphi_{map}$ expresses that,
	for every pair of variables $x_{i, j}$ and $x_{i, j'}$ with $j \neq j'$,
	at most one of them shall be \true;
	the second half is satisfied if the same holds for all pairs $x_{i, j}$ and $x_{i', j}$
	with $i \neq i'$.
	Conversely, a satisfying assignment $A$ (a subset of $\var{R \times S}$) for $\varphi_{map}$
	defines a relation $\sigma \subseteq R \times S$ and a set $X \subseteq R$ 
	by setting $\sigma = \{ (a_i, b_j)  \mid x_{i, j} \in A \}$
	and $X = \{a_i \in R \mid \exists\, 1 \le j \le |S| \colon x_{i, j} \in A\}$.
	By construction, the relation $\sigma$ is not only a function $\sigma \colon X \to S$,
	but an injection.
	In summary, $\varphi_{map}$ is fulfilled exactly by the assignments described above.
	Observe that $\varphi_{map}$ is in conjunctive normal form 
	and therefore also $3$-normalized
	as each literal is a conjunctive clause of length 1.

	We now formalize the requirement that a configuration $(X, \sigma)$
	is an inclusion dependency in a given pair of databases $\rel$ and $\sel$
	over the respective schemas $R$ and $S$,
	that is, that $\rel[X] \subseteq \sel[\sigma(X)]$ holds.
	First, assume that each database consists only of a single row
	$r_\ell$ and $s_m$, respectively.
	We say a pair of attributes $(a_i, b_j) \in R \times S$ is \emph{forbidden}
	for $r_\ell$ and $s_m$ if $r_\ell[a_i] \neq s_m[b_j]$.
 	Let $F_{\ell, m}$ be the set of all forbidden pairs.
 	For an configuration $(X, \sigma)$ to be an IND,
 	the variables $x_{i,j}$ need to be set to \false for all $(a_i, b_j) \in F_{\ell, m}$.
 	In terms of Boolean formulas, this is represented by the conjunctive clause
	$M_{\ell, m} = \bigwedge_{x \in \var{F_{\ell, m}}}\! \neg x$.
	It follows that $(X, \sigma)$ is an inclusion dependency if and only if 
	the corresponding variable assignment satisfies both $\varphi_{map}$ and $M_{\ell, m}$.

	Now suppose $\sel$ has multiple rows, while $\rel$ is still considered to have only one.
	The configuration $(X, \sigma)$ is an IND for $(\rel, \sel)$
	iff it is one for at least one instance $(\rel, \{s_m\})$ with $s_m \in \sel$.
	If also $\rel$ has more records, then $(X, \sigma)$ is an IND for $(\rel,\sel)$ 
	iff it is one in each instance $(\{r_\ell\}, \sel)$ with $r_\ell \in \rel$.
	Therefore, we obtain an inclusion dependency if and only if
	$\varphi_{map}$ and the formula
    \begin{align*}
    	\varphi = \bigwedge_{r_{\ell} \in \rel} \bigvee_{s_m \in \sel} M_{\ell, m}
	\end{align*}
	are simultaneously satisfied by the assignment corresponding to $(X,\sigma)$.
  
	The formula $\varphi \wedge \varphi_{map}$ is antimonotone and 3-normalized.
	The (disjunctive) clauses of $\varphi_{map}$ can be constructed
	in total time $\Or(|R|^2 |S| + |R||S|^2)$
	and all sets $F_{\ell,m}$ together are computable in time $\Or(|\rel||\sel||R||S|)$.
	An encoding of $\varphi \wedge \varphi_{map}$ can thus be obtained
	from the input databases $\rel$ and $\sel$ in polynomial time.
	Finally, by the above observation that any solution for the
	sub-formula $\varphi_{map}$ that corresponds to $(X,\sigma)$
	has weight $|X|$, the reduction preserves the parameter.
\end{proof}

\subsection{Hardness for $W[3]$}
\label{subsec:ind-w3-hard}

\noindent
We now show that detecting inclusion dependencies is also hard for $\W[3]$.
We argue that the existence of weighted satisfying assignments
for \mbox{3-normalized}, antimonotone formulas can be decided by solving instances of 
the more restricted \INDfixed variant.
For the reduction, we make use of indicator functions.
On the one hand, we interpret propositional formulas $\varphi$ over $n$ variables 
as Boolean functions $f_\varphi \colon \{0, 1\}^n \to \{0, 1\}$ in the obvious way.
On the other hand, for a pair of databases $\rel$ and $\sel$ over the the same schema $R$,
we represent any subset $X \subseteq R$ by its characteristic vector of length $|R|$.
We then define the \emph{indicator function}
$f_{(\rel, \sel)} \colon \{0, 1\}^{|R|} \to \{0, 1\}$ by requiring that
$f_{(\rel, \sel)}(X) = 1$ holds if and only if $X$ is an inclusion dependency
(with the identity mapping between the columns).

We claim that for any formula $\varphi$ that is antimonotone and 3-normalized,
there is a pair $(\rel, \sel)$ of databases computable in polynomial time  
such that $f_\varphi = f_{(\rel, \sel)}$.
Clearly, this gives a parameterized reduction from \wans to the \INDfixed problem.
The remainder of this section is dedicated to proving this claim.
Recall that the top level connective of a \mbox{3-normalized} formula is a conjunction.
We start by demonstrating how to model this using databases.

\begin{lem}
\label[lem]{lem:ind-hardness_conjunction}
	Let $(\rel^{(1)}, \sel^{(1)})$ and $(\rel^{(2)}, \sel^{(2)})$ be two
	pairs of databases, all over the same schema $R$,
	with indicator functions $f^{(1)}$ and $f^{(2)}$, respectively.
	There exists a polynomial time computable pair $(\rel, \sel)$ over $R$
	of size $|\rel| = |\rel^{(1)}| + |\rel^{(2)}|$ and $|\sel| = |\sel^{(1)}| + |\sel^{(2)}|$,
	having indicator function $f_{(\rel, \sel)} = f^{(1)} \wedge f^{(2)}$.
\end{lem}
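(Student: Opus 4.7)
The plan is to take a disjoint union of the two instances after relabelling the values so that nothing in the first pair can accidentally match anything in the second. Concretely, I would replace every value $v$ occurring in $\rel^{(1)}$ or $\sel^{(1)}$ by a tagged copy $(1,v)$, and every value occurring in $\rel^{(2)}$ or $\sel^{(2)}$ by $(2,v)$. Since an inclusion dependency only depends on equalities between the projected tuples within a single pair, this renaming leaves $f^{(1)}$ and $f^{(2)}$ unchanged. I then set $\rel = \rel^{(1)} \,\dot\cup\, \rel^{(2)}$ and $\sel = \sel^{(1)} \,\dot\cup\, \sel^{(2)}$. The size bound is immediate (distinct tags make the unions truly disjoint on the row level), and the construction is obviously computable in polynomial time.

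The substance of the proof is to verify, for every $X \subseteq R$, the equivalence $\rel[X] \subseteq \sel[X]$ iff $\rel^{(i)}[X] \subseteq \sel^{(i)}[X]$ for both $i \in \{1,2\}$. The $(\Leftarrow)$ direction is easy: any row $r \in \rel$ lies in some $\rel^{(i)}$, and the assumed inclusion there yields a witness $s \in \sel^{(i)} \subseteq \sel$ with $s[X] = r[X]$. For the $(\Rightarrow)$ direction, fix $r \in \rel^{(i)}$ and any $s \in \sel$ with $r[X] = s[X]$. This is where the tagging does the work: every entry of $r[X]$ carries the tag $i$, so the equality $r[X] = s[X]$ forces the entries of $s[X]$ to carry the tag $i$ as well, which is only possible if $s \in \sel^{(i)}$. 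Hence the witness lies inside the $i$-th pair, giving $r[X] \in \sel^{(i)}[X]$ as required.

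The only step I expect to need real care is this separation argument, and the one potential wrinkle is the degenerate case $X = \emptyset$, where every projection collapses to the empty tuple and the tagging carries no information. This is harmless as soon as all four databases are non-empty, because then $f^{(1)}(\emptyset) = f^{(2)}(\emptyset) = f_{(\rel,\sel)}(\emptyset) = 1$ all hold trivially; I would either add this non-emptiness assumption or pre-process empty inputs away at the start, since the lemma is only applied inside the larger reduction of \Cref{subsec:ind-w3-hard}, where the relevant column sets are nonempty. With that caveat out of the way, the conjunction $f_{(\rel,\sel)} = f^{(1)} \wedge f^{(2)}$ follows directly.
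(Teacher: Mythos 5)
Your proposal is correct and follows essentially the same route as the paper's proof: the paper assumes the value sets of the two pairs disjoint ``without loss of generality'' and takes the plain union, which is exactly what your explicit tagging implements, and both arguments then rest on the same observation that a witness row for $r \in \rel^{(i)}$ can only come from $\sel^{(i)}$ because values from the other pair never match. Your remark about the degenerate case $X = \emptyset$ is a point the paper silently glosses over, and your resolution (non-empty databases, which holds in the constructions where the lemma is applied) is the right one.
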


\begin{proof}
	Without loosing generality, the values appearing in $\rel^{(1)}$ and $\sel^{(1)}$
	are disjoint from those in $\rel^{(2)}$ and $\sel^{(2)}$.
	We straightforwardly construct $(\rel, \sel)$ 
	as $\rel = \rel^{(1)} \cup \rel^{(2)}$ and $\sel = \sel^{(1)} \cup \sel^{(2)}$,
	which matches the requirements on both the computability and size.
	We still need to show $f_{(\rel, \sel)} = f^{(1)} \wedge f^{(2)}$.

	Equivalently, we prove that a set $X \subseteq R$ is an inclusion dependency in $(\rel, \sel)$
	if and only if it is one in both pairs
	$(\rel^{(1)}, \sel^{(1)})$ and $(\rel^{(2)}, \sel^{(2)})$. 
	Let $X$ be an IND in $(\rel^{(1)}, \sel^{(1)})$ as well as $(\rel^{(2)}, \sel^{(2)})$.
	That means, for every row $r \in \rel^{(1)}$,
	there exists some $s \in \sel^{(1)}$ with $r[X] = s[X]$; 
	same for $\rel^{(2)}$ and $\sel^{(2)}$.
	As all those rows are also present in $(\rel, \sel)$, $X$ is an IND there as well.
	Conversely, suppose
	$X$ is not an inclusion dependency in, say, $(\rel^{(1)},\sel^{(1)})$.
	Then, $\rel^{(1)}$ has a row $r$ that disagrees with every $s \in \sel^{(1)}$
	on some attribute in $X$.
	The record $r$ is also in present in $\rel$
	and all rows in $\sel$ belong either to $\sel^{(1)}$ or have completely disjoint values.
	This results in $r[X] \not= s[X]$ for \emph{every} record $s \in \sel$.
\end{proof}

One could hope that there is a similar method treating disjunctions.
However, we believe that there is none that is both computable in \mbox{FPT-time} 
and compatible with a complementing method representing conjunctions
(for example, the one above).
The reason is as follows.
Negative literals are easily expressible by pairs of single-row databases.
Together with \mbox{FPT-time} procedures of constructing conjunctions as well as disjunctions,
one could encode antimonotone Boolean formulas of arbitrary logical depth.
The Antimonotone Collaps Theorem~\cite{DowneyFellows13Parameterized},
states that the \textsc{Weighted Antimonotone $t$-normalized Satisfiability} problem
is $\W[t]$-complete for every odd $t \ge 3$.
This would then render \INDfixed to be hard for all classes $\W[t]$ and,
as a consequence of \Cref{lem:ind-to-wans-without-mapping,lem:ind-with-or-without-mapping},
the $\W$-hierarchy would collapse to its third level.
That being said, there is a method specifically tailored to antimonotone DNF formulas.

\begin{lem}
\label[lem]{lem:ind-hardness_DNF}
	Let $\varphi$ be an antimonotone formula in disjunctive normal form.
	There are relational databases $\rel$ and $\sel$ over the same schema
	computable in time polynomial in the size of $\varphi$
	such that $f_\varphi = f_{(\rel, \sel)}$.
\end{lem}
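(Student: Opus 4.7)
The plan is to exploit the asymmetry in the definition of an inclusion dependency: $X$ is an IND from $\rel$ to $\sel$ iff \emph{for every} row of $\rel$ \emph{there exists} a row of $\sel$ agreeing on $X$. By making $\rel$ a single row, the IND condition collapses to a pure existential over $\sel$, which exactly mirrors the disjunctive top-level connective of a DNF.

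Write $\varphi = \bigvee_{i = 1}^{m} C_i$ with $C_i = \bigwedge_{a \in N_i} \neg x_a$ for some sets $N_i$ of variables. Identify the variables with a schema $R$ so that any subset $X \subseteq R$ corresponds to the assignment setting exactly the variables in $X$ to \true. Because $\varphi$ is antimonotone, $X$ satisfies $C_i$ iff $X \cap N_i = \emptyset$, so
\[
	\varphi(X) = 1 \iff \exists\, i \in [m] \colon X \cap N_i = \emptyset.
\]

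First I would define $\rel = \{r_0\}$, where $r_0$ is the all-zeros record over $R$, and $\sel = \{s_1, \dots, s_m\}$ where, borrowing the construction from \Cref{lem:hs_to_unique},
\[
	s_i[a] = \begin{cases} i, & \text{if } a \in N_i;\\ 0, & \text{otherwise.} \end{cases}
\]
Then the difference set between $r_0$ and $s_i$ is exactly $N_i$, and so $r_0[X] = s_i[X]$ iff $X \cap N_i = \emptyset$. Since $\rel$ has just one row, $X$ is an IND with identity mapping between $(\rel,\sel)$ iff $r_0[X] = s_i[X]$ for some $i$, iff $X \cap N_i = \emptyset$ for some $i$, iff $\varphi(X) = 1$. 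This proves $f_\varphi = f_{(\rel,\sel)}$.

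The construction is polynomial: $\rel$ and $\sel$ have at most $1 + m$ rows and $|R|$ columns, and the entries are bounded by $m$. There is no real obstacle here; the only delicate point is to pick \emph{distinct} nonzero fillers $i$ (rather than a common symbol $1$) so that the difference set $D(r_0,s_i)$ is forced to equal $N_i$ instead of possibly shrinking through coincidental agreement between two $s_i$'s and $r_0$. Combining this lemma with \Cref{lem:ind-hardness_conjunction} then yields the desired $(\rel,\sel)$ for any antimonotone $3$-normalized formula by realizing each outer conjunct (a DNF subformula) via this lemma and then taking the ``disjoint union'' construction of \Cref{lem:ind-hardness_conjunction}.
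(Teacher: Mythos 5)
Your proof is correct, and it reaches the conclusion by a genuinely leaner construction than the paper's. The paper keeps one row of $\rel$ per conjunctive clause $M_j$ (the clause's characteristic vector with filler $j$) and builds $\sel$ from $m$ copies of $\rel$, overwriting with a fresh symbol $\times$, in the $j$-th copy, exactly the attributes occurring in $M_j$; a set $X$ then admits a match in $\sel$ for \emph{every} row of $\rel$ precisely when some copy is left untouched on $X$, i.e.\ when some clause is disjoint from $X$. You instead observe that the universal quantifier over $\rel$ is not needed to express a disjunction: collapsing $\rel$ to the single all-zeros row reduces the IND condition to a pure existential over $\sel$, and one row of $\sel$ per clause (nonzero exactly on that clause's variables) makes $r_0[X] = s_i[X]$ equivalent to $X \cap N_i = \emptyset$, which is exactly satisfaction of the antimonotone clause $C_i$. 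Both constructions compose with \Cref{lem:ind-hardness_conjunction} in the same way (after the usual renaming that makes value sets disjoint, under which indicator functions are invariant), but yours uses $\Or(m)$ rows in total where the paper's $\sel$ has $\Th(m^2)$ rows. One small remark: the distinct fillers $i$ that you flag as the ``delicate point'' are not actually needed in your version\emDash{}since every comparison is against the all-zeros row $r_0$, a single common nonzero symbol would force $D(r_0,s_i) = N_i$ just as well; distinctness only matters when rows of the construction are compared with one another, which never happens here. Keeping them distinct is of course harmless.
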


\begin{proof}
	Let $x_1, \dots, x_n$ be the variables of $\varphi$.
	Define schema $R = \{a_1, \dots, a_n\}$ by identifying variable $x_i$ with attribute $a_i$.
	We first describe how to obtain $\rel$ and subsequently construct a matching database $\sel$.
	Let $M_1, \dots, M_m$ denote the $m$ constituting conjunctive clauses of the DNF formula $\varphi$.
	For each $M_j$, we define the row $r_j$, similarly as in the proof of \Cref{lem:hs_to_unique}, as
	\begin{equation*}
		r_j[a_i] =
			\begin{cases}
				j, & \text{if variable } x_i \text{ occurs in } M_j;\\
				0, & \text{otherwise.}
			\end{cases}
	\end{equation*}
	See \Cref{fig:ind-hardness} for an example.
	
	\begin{figure}
	\centering
	\includegraphics[page=1,scale=1.1]{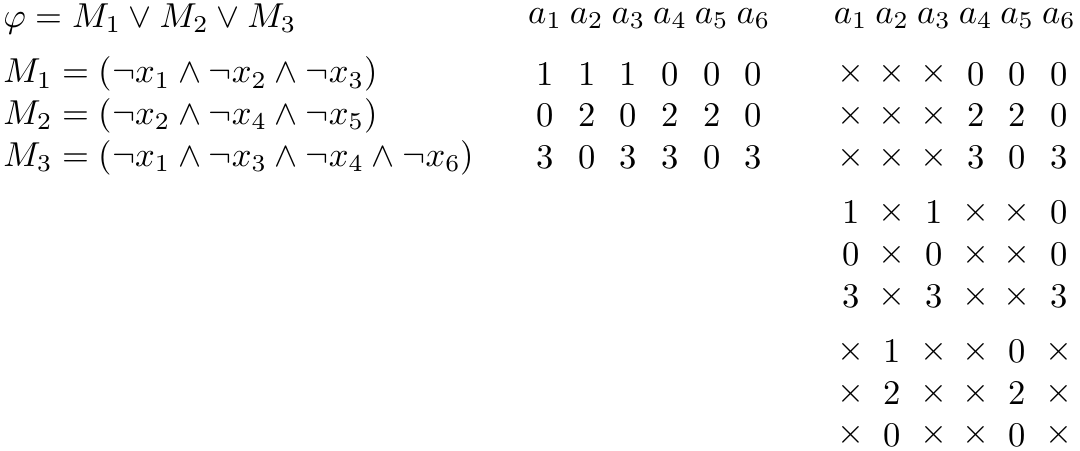}
	\caption{Illustration of \Cref{lem:ind-hardness_DNF}.
		The antimonotone DNF formula $\varphi$ on the left 
		has the three conjunctive clauses $M_1, M_2, M_3$.
		The equivalent instance of \INDfixed consists of database
		$\rel$ in the center and $\sel$ on the right.
		There are three maximal inclusion dependencies
		$\{a_4, a_5, a_6\}$, $\{a_1, a_3, a_6\}$, and $\{a_3, a_5\}$.
		Adding any more attributes to either of them
		would create a hitting set for the conjunctive clauses,
		corresponding to an unsatisfying assignment.
	}
\label{fig:ind-hardness}
\end{figure}
	
	The second database $\sel$ is constructed by first creating $m$ copies of $\rel$.
	Let $\times$ be a new symbol not appearing anywhere in $\rel$.
	In the \mbox{$j$-th} copy of $\rel$, we set the value for attribute $a_i$ to $\times$
	whenever $x_i$ occurs in $M_j$.
	(See \Cref{fig:ind-hardness} again.)
	Note that $|R| = n$ equals the number of variables of $\varphi$
	and $|\rel|$ is linear in the number $m$ of conjunctive clauses, while $|\sel|$ is quadratic.
	The time to compute the pair $(\rel, \sel)$ is linear in their combined size
	and polynomial in the size of $\varphi$.
	It is left to show that the indicator function satisfies $f_\varphi = f_{(\rel, \sel)}$.

	First, suppose $f_\varphi(X) = 1$ for some length-$n$ binary vector $X$
	or, equivalently, for some subset $X \subseteq R$.
	We show that $f_{(\rel, \sel)}(X) = 1$,
	meaning that $X$ is an inclusion dependency in $(\rel,\sel)$.
	Necessarily, we have $f_{M_j}(X) = 1$ for at least one conjunctive clause $M_j$.
	Since $M_j$ contains only negative literals, all of its variables evaluate to \false.
	This is equivalent to $X$ not containing any attribute that corresponds to a variable in $M_j$.  
	In the \mbox{$j$-th} copy of $\rel$ in the database $\sel$,
	the values were changed to $\times$ for exactly those attributes.
	Thus, the projection $\sel[X]$ contains an exact copy of $\rel[X]$
	and $X$ is indeed an IND,
	resulting in $f_{(\rel, \sel)}(X) = 1$.

	For the opposite direction, suppose $f_\varphi(X) = 0$.
	Each conjunctive clause thus contains a variable corresponding to some attribute in $X$.
	Consequently, in each row of $\sel$,
	there is an attribute in $X$ whose value was replaced by $\times$.
	As $\rel$ does not contain the symbol $\times$ at all, $X$ is not an IND
	and $f_{(\rel, \sel)}(X) = 0$.
\end{proof}

\Cref{lem:ind-hardness_conjunction,lem:ind-hardness_DNF} imply that,
given an antimonotone, \mbox{3-normalized} formula $\varphi$, 
we can build an instance $(\rel, \sel)$ of \INDfixed in \mbox{FPT-time} (even polynomial)
such that $f_\varphi = f_{(\rel, \sel)}$.
Together with the findings of \Cref{subsec:ind-is-in-w3},
this finishes the proof of \Cref{thm:ind_W[3]}.

\subsection{Discovery}
\label{subsec:IND_discovery}

\noindent
As we did with minimal unique column combinations and functional dependencies,
we can lift our results from detecting a single inclusion dependency to discovering all of them. 
It turns out that there is a parsimonious equivalence with the enumeration of
assignments to antimonotone, \mbox{3-normalized} formulas, as detailed in \Cref{thm:enum_ind}.
The key observations to prove this are once again that the reductions above
are polynomial time computable,
independently of the parameter, and that they preserve inclusions.

\Cref{lem:ind-hardness_conjunction,lem:ind-hardness_DNF}
describe how to turn the formula $\varphi$ in polynomial time into a pair 
of databases over the common schema $R$,
which is effectively the same as $\var{\varphi}$,
such that the inclusion dependencies $X \subseteq R$ are in canonical
correspondence with the satisfying assignments $A \subseteq \var{\varphi}$.
Moreover, the parameterized reduction preserves inclusion relations between the solutions
such that the maximal dependencies also correspond to the maximal assignments.
In other words, \Cref{lem:ind-hardness_conjunction,lem:ind-hardness_DNF}
induce a parsimonious reduction from \EnumWANS to \EnumINDfixed.
It is also easy to see that \Cref{lem:ind-with-or-without-mapping}
implies a parsimonious reduction from the enumeration
of such restricted inclusion dependencies to the general \EnumIND problem.
The lemma does nothing else but invalidating all non-identity
mappings between the columns.
Finally, \Cref{lem:ind-to-wans-without-mapping} shows how to
translate general inclusion dependencies back to antimonotone, 3-normalized formula $\varphi$.
Observe that the (inclusion-wise) maximal satisfying assignments of the resulting formula
correspond exactly to the notion of maximality for general INDs 
(see \Cref{subsec:prelims_databases,subsec:IND_problem_definitions} for details).
This shows the equivalence of all the enumeration problems involved.
Again, the space complexity is preserved up to polynomial factors by the parsimonious reductions.

We complete the proof of \Cref{thm:enum_ind} by showing that the problems
are at least as hard as \TransHyp.
This is an easy exercise using the structure of antimonotone CNFs.

\begin{lem}
\label[lem]{lem:enum_ind_transversal_hard}
	The enumeration of maximal satisfying assignments of antimonotone Boolean formulas 
	in conjunctive normal form
	is equivalent to \emph{\TransHyp} under parsimonious reductions.
	In particular, \emph{\EnumWANS} is at least as hard as the \emph{\TransHyp} problem.
\end{lem}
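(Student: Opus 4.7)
The plan is to exhibit a direct syntactic correspondence between antimonotone CNF formulas and hypergraphs, in which each disjunctive clause is read as a hyperedge. Given a hypergraph $(V,\Hyp)$, I would construct the formula $\varphi_\Hyp = \bigwedge_{E \in \Hyp} \bigvee_{v \in E} \neg x_v$ on variable set $\{x_v \mid v \in V\}$; conversely, from an antimonotone CNF $\varphi$ I would read off the hypergraph $\Hyp_\varphi$ on vertex set $\var{\varphi}$ whose edges are exactly the variable sets underlying the clauses. Both transformations are computable in linear time.

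The heart of the argument is the observation that, identifying an assignment with the set $A$ of variables it sets to \true, $A$ satisfies the clause $\bigvee_{v \in E} \neg x_v$ iff at least one $v \in E$ lies outside $A$, i.e., iff $(V{\setminus}A) \cap E \neq \emptyset$. Hence $A$ satisfies $\varphi_\Hyp$ if and only if $V{\setminus}A$ is a transversal of $\Hyp$. Since complementation is an inclusion-reversing involution on $\mathcal{P}(V)$, it sends the maximal satisfying assignments bijectively to the minimal hitting sets. The input transformations above together with the solution translation $g(\cdot,A) = V{\setminus}A$ therefore form parsimonious reductions in both directions, establishing the stated equivalence.

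For the ``in particular'' clause, I would simply note that an antimonotone CNF is itself antimonotone and $3$-normalized: each negative literal $\neg x$ may be viewed as a length-$1$ conjunctive clause $\neg x$ nested inside its disjunction. Thus the reduction from \TransHyp already outputs a valid instance of \EnumWANS, and the hardness transfers.

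The only point requiring any verification is that complementation correctly interchanges the maximality of satisfying assignments with the minimality of hitting sets; this is immediate from $A \subseteq A' \Leftrightarrow V{\setminus}A' \subseteq V{\setminus}A$. No substantive obstacle is anticipated\emDash{}the lemma essentially restates the classical duality between antimonotone CNF satisfiability and hypergraph transversals in the language of parsimonious reductions.
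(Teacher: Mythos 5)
Your proposal is correct and follows essentially the same route as the paper: both directions rest on reading clauses as hyperedges and using the inclusion-reversing complementation of assignments to pair maximal satisfying assignments with minimal transversals, and the ``in particular'' part is handled identically by viewing negative literals as length-$1$ conjunctive clauses. No gaps.
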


\begin{proof}
	For the reductions, in both directions,
	we identify the (disjunctive) clauses of an antimonotone CNF formula $\varphi$
	with the sets of variables they contain.	
	To spell it out, let $\var{\varphi}$ be the set of all variables of $\varphi$
 	and $C_1, \dots, C_m \subseteq \var{\varphi}$ the clauses.
 	Since the the formula is antimonotone,
 	any $C_i$ is satisfied iff there is a variable $x \in C_i$
 	that is assigned \false.
	In other words, an assignment $A \subseteq \var{\varphi}$ (the set of \true variables)
	is satisfying iff its complement $\overline{A} = \var{\varphi}{\setminus}\nwspace{A}$
	is a hitting set of the hypergraph $\{C_i\}_{i \in [m]}$.
	Assignment $A$ is maximal in that regard iff $\overline{A} \in \Tr(\{C_i\}_{i})$ is a minimal transversal.
	In the very same fashion, we can construct from any hypergraph $(\Hyp,V)$
	an antimonotone CNF formula on the variable set $\{x_v \mid v \in V\}$
	by setting $\varphi = \bigwedge_{E \in \Hyp} \bigvee_{v \in E} \neg x_v$.
	Complementing the maximal satisfying assignments of $\varphi$
	recovers the minimal hitting sets of $\Hyp$.
	
	The second part of the lemma follows from any CNF formula being also 3-normalized
	by viewing literals as conjunctive clauses of length 1.
\end{proof}

\section{Conclusion}
\label{sec:conclusion}

\noindent
We have determined the complexity of the detection problems 
for various types of multi-column dependencies,
parameterized by the solution size.
Our results imply that these problems do not admit FPT-algorithms
unless the $\W$-hierarchy at least partially collapses. 
In fact, the detection of inclusion dependencies turned out to be surprisingly hard
in that it is $\W[3]$-complete.
Therefore, a small solution size alone is not enough to explain the good performance in practice.
This is unfortunate as the choice of parameter appears to be very natural in the sense that the requirement of a small solution size is regularly met in practice. 
Of course, our results do not preclude FPT-algorithms for \emph{other} parameters.
As an example, \Unique on databases over schema $R$ 
are trivially in \FPT with respect to the parameter $|R|$ by checking all subsets.
This is of course not very satisfying since assuming the schema to be small is much stronger
than assuming the solutions to be small.
Similarly, one could consider the maximum number $d$ of attributes on which
two rows in the data disagree as parameter.
Using the standard bounded search tree of height $k$ with nodes of degree at most $d$
gives an FPT-algorithm with respect to the parameter $d + k$.
Again, the assumption that any two rows in a relation differ only on a few columns 
seems to be fairly unrealistic for most practical data sets.
More structural research into relational databases is needed to bridge the gap between
the worst-case hardness and empirical tractability of dependency detection problems.
This will involve identifying properties of realistic instances that can explain and hopefully
even improve the running times of practical methods.
For example, by designing multivariate algorithms with more than one parameter.

On the other hand, our results regarding the discovery of all dependencies 
of a certain type in a database
are indeed able to explain the good run times in practice.
We proved that the profiling of relational data at its core is
closely related to the transversal hypergraph problem.
Although the exact complexity of the latter is still open,
there are many empirically efficient algorithms known for it.
Most importantly, modern algorithms for the enumeration of hitting sets
have the advantage that their space complexity is only linear
in the input size, this is a feature many data profiling algorithms are still lacking today~\cite{Abedjan18DataProfiling,Papenbrock16HyFD,Wei19DiscoveryEmbeddedUnique}.
Even more than large run times, prohibitive memory consumption still puts certain databases out of reach
for data profiling today.
The relations shown here may therefore be a way to improve the current state of the art
algorithms in that direction.

\section*{Ackowledgements}

\noindent
We thank Sebastian Kruse, Felix Naumann, and Thorsten Papenbrock
for proposing data profiling as a research topic to us,
for their continued support, and for providing the data presented in the introduction.
We are also thankful for the discussion with Jan Kossmann on query optimization
and with Yann Strozecki on enumeration complexity,
both provided valuable insights and pointers to the literature.

\bibliographystyle{plainurl} 
\bibliography{references}

\end{document}